\newtheorem{theorem}{Theorem}[section]
\newtheorem{lemma}{Lemma}[section]
\newtheorem{claim}{Claim}[section]
\newtheorem{definition}[theorem]{Definition}
\begin{document}
\title{Super-Polynomial Quantum Speed-ups for Boolean Evaluation Trees with Hidden Structure}
\author{Bohua Zhan\thanks{Department of Mathematics, Princeton University. Part of the work
conducted while at the Massachusetts Institute of Technology. Email: bzhan@princeton.edu} \text{  }  
Shelby Kimmel\thanks{Center for Theoretical Physics, Massachusetts Institute of Technology,
Email: skimmel@mit.edu. Supported by NSF Grant No. DGE-0801525,
{\em IGERT: Interdisciplinary Quantum Information Science and Engineering}.} \text{ }
Avinatan Hassidim\thanks{Google. Part of the work conducted while at Massachusetts Institute of Technology.
Email: avinatanh@gmail.com}  \text{  }}


\maketitle
\begin{abstract}
We give a quantum algorithm for evaluating a class of boolean formulas (such
as NAND trees and 3-majority trees) on a restricted set of inputs. Due to
the structure of the allowed inputs, our algorithm can evaluate a depth $n$ tree
using $O(n^{2+\log\omega})$ queries, where $\omega$ is independent of $n$ and depends only
on the type of subformulas within the tree. We also
prove a classical lower bound of $n^{\Omega(\log\log n)}$ queries, thus showing
a (small) super-polynomial speed-up.
\end{abstract}

\thispagestyle{fancy}
\rhead{MIT-CTP 4242}
\renewcommand{\headrulewidth}{0pt}
\renewcommand{\footrulewidth}{0pt}



\section{Introduction}
\label{intro}

In problems where quantum super-polynomial speedups are found, there is
usually a promise on the input.
Examples of such promise problems include
Simon's algorithm \cite{Simon1994}, the Deutsch-Jozsa algorithm
\cite{Deutsch1992} and the hidden subgroup problem \cite{Jozsa2001} (of which Shor's factoring algorithm
is a special case \cite{Shor1994}).

In fact, Beals et. al \cite{Beals2001} show that for a total boolean function (that is,
without a promise on the inputs), it is impossible to obtain a super-polynomial speedup over deterministic,
(and hence also over probabilistic),
classical algorithms. Expanding on the result, Aaronson and Ambainis showed in
\cite{Aaronson2009} that no super-polynomial speedup over probabilistic classical algorithms
is possible for computing a property of a function $f$ that is invariant under a permutation of the
inputs and the outputs,
even with a restriction on the set of functions $f$ considered.

In this paper, we take a quantum algorithm for a total boolean formula, which by \cite{Beals2001} cannot attain a
 super-polynomial speed-up, 
and show how to restrict the inputs to a point that a super-polynomial speed-up over probabilistic classical
algorithms can be attained. While super-polynomial speed-ups have been attained for boolean formulas (for example,
standard problems such as Deutsch-Josza could be written as a boolean formula) we don't know of other examples
that restrict the inputs to an extant problem, and which can be written naturally as a straightforward composition
of simple boolean gates.

 The total
boolean formulas we consider are boolean evaluation trees, such as the NAND tree, which has a quantum algorithm due to Farhi et. al.
\cite{Farhi2007}.
We show that
 the existing quantum algorithm of
Reichardt and $\check{\rm{S}}$palek \cite{Reichardt2009} for total boolean evaluation trees (with a small tweak) achieves a super-polynomial speed-up on our restricted set of inputs.
We choose our allowed set of inputs by closely examining the existing quantum algorithm for total functions
to find inputs that are ``easy'' for the algorithm.




While the restrictions we make on the domain are natural for a quantum algorithm, they are not so natural for a classical algorithm,
making our bound on classical query complexity the most technical part of this paper. We consider an
even more limited restriction on the inputs for the classical case, and show even with the added promise,
any probabilistic classical algorithm fails with high probability when less than
a super-polynomial number of queries are used. The additional restriction
considered in our classical proof leads to a problem similar
to a
problem considered by Bernstein and Vazirani called
Recursive Fourier Sampling (RFS, also known as the Bernstein-Vazirani Problem) \cite{Bernstein1993}. RFS is another example of a problem that acheives a super-polynomial speed-up. Extensions and lower bounds to RFS have been considered in \cite{Aaronson2003,Hallgren2008,Johnson2011}. We will describe the connections and differences between our problem and RFS later in this section.

Our problem is to consider a restricted set of inputs to a boolean evaluation tree.
An evaluation tree for a boolean function $f:\{0,1\}^c \rightarrow \{0,1\}$ is a complete
$c$-ary tree $T$ of depth $n$ where every node of $T$ is assigned a bit value. In general,
the leaves can have arbitrary values, and for every non-leaf node $v$ we have
\[Val(v) = f(Val(v_1), \ldots, Val(v_c)).\]
Here $Val(v)$ is the bit value of $v$, and $v_1, \ldots v_c$ are the children of $v$ in $T$.
We also sometimes say that a node $v$ corresponds to the function $f$ evaluated
at that node.
The value of the root is the value of the tree $T$. We want to determine
the value of $T$ by querying leaves of $T$, while making as few queries as possible.

For most functions $f$, the restriction on inputs is a bit artificial. For the NAND function, the
restriction can be given a slightly more natural interpretation. We will sketch this interpretation here
to give some intuition.
It is known that NAND trees correspond to game trees, with the value at each node denoting
whether the player moving can win under perfect play.\footnote{For more on this correspondence,
see Scott Aaronson's blog, \emph{Shtetl-Optimized},
 ``NAND now for something completely different," http://www.scottaaronson .com/blog/?p=207.}
Each turn, a player moves to one of two nodes,
and if both nodes have the same value, neither move will
change the game's outcome. However, if the two nodes have
different values, the choice is critical; one direction
will guarantee a win and the other a loss (under perfect play).
Then our restriction in the case of the game tree
is that for any perfect line of play (starting from
any node in the tree), the number of such decision
points is limited.
Farhi et al.
\cite{Farhi2007} showed that total NAND trees can be evalutated in $O(\sqrt{N})$ quantum queries, where $N$
is the number of leaves in the tree (corresponding to the number of possible paths of play in the game tree model). This is a polynomial speedup over the best
classical algorithm, which requires $\Omega(N^{.753})$ queries \cite{Saks1986}. One would expect that a tree
with few decision points would be easy to evaluate both quantumly and classically, and we will show that this
is indeed the case.

For the NAND/game tree, suppose we only allow inputs where on every path from root to leaf, there is
exactly one decision point, and on every path, they always occur after the same number of steps.
Then the oracle for this game tree is also a valid oracle for the 1-level RFS problem (with some
further restrictions on whether at the decision point moving right or left will cause you to win). Recall
that in the 1-level RSF, you are given a function $f:\{0,1\}^n\rightarrow\{0,1\}$ such that $f(x)=x\cdot s$
where $s$ is a hidden string. The goal of the problem is to output $g(s)$, where $g:\{0,1\}^n\rightarrow\{0,1\}$
is another oracle you are given access to. For general RFS, this 1-level problem is composed, and you are given different oracles $g_c$ for
each instance of the basic problem.
If we solve the NAND tree with one decision point on each path,
all at the same level, then we are essentially solving the 1-level RFS problem, but instead of plugging $s$ into
an oracle, the output of the NAND tree directly gives PARITY($s_n$), where $s_n$ is the last
bit of $s$.

While our problem is related to RFS, there are some significant differences. First, the analogous basic problem
in our case is most naturally written as a sequence of boolean gates, with the restriction on
inputs formulated in terms of these gates. Second, in our problem, the
output of each level is a bit that is then applied directly to the next level of recursion, rather
than using the unnatural convention of oracles for each level of recursion.
This lack of oracles makes proving the classical lower bound harder, as a classical algorithm could potentially
use partial information on an internal node.
Finally, this composed structure is only apparent in the classical bound; our quantum
algorithm applies to inputs such that the problem can't be decomposed into discrete layers, as in RFS.

We will in general consider evaluation trees made up of what we call {\it{direct}} boolean functions. We will
define such functions in Sec. \ref{SPsec}, but one major subclass of direct boolean functions is
threshold functions and their negations. We say that $f$ is a threshold function if there exists $h$ such that $f$ outputs 1 if
and only if at least $h$ of its inputs are 1. So the NAND function is a negation of the threshold function with $c=h=2$.
A commonly considered threshold function is the 3-majority function (3-MAJ) with $c=3$ and $h=2$.

We will now describe our allowed inputs to the evaluation tree.
 We first classify the nodes of the tree as follows:
for threshold (and negation of threshold) functions, each non-leaf node in the tree is classified as {\it{trivial}} if its children are
either all 0's or all 1's, and as a {\it{fault}} if otherwise.\footnote{In Section \ref{SPsec}, we will define trivial and fault nodes for non-threshold direct functions, and also see that it is possible
for other inputs to be trivial as well, but the all 0's and all 1's inputs are always trivial.}
So from our NAND/game tree example, decision points are faults.
Note trivial
nodes are easy to evaluate classically, since evaluating one child gives the value at the
node if the node is known to be trivial. It turns out that they are also easy to evaluate quantumly.
We next
classify each child node of each non-leaf node as either {\it{strong}} or {\it{weak}}.
If the output of a threshold function is 1, then the strong child nodes are those with value 1,
otherwise they are those with value 0. If a node is trivial then all children are strong.
Classically, the strong child nodes alone determine the value at the node,
and we will see that they matter more in computing the cost of the quantum algorithm as well.

Our promise is then that the leaves have values such that the tree satisfies the {\em $k$-faults} condition:

\begin{definition}\emph{($k$-fault Tree)}\label{defk}
Consider a $c$-ary tree of depth $n$, where throughout the tree nodes have been
designated as trivial or fault, and the child nodes of each node have been
designated as strong or weak in relation to their parent. All nodes (except leaves)
have at least 1 strong child node. For each node $d$ in the tree,
let $G_d$ be the set of strong child nodes of $d$. Then to each node $d$, we assign an
integer $\kappa(d)$ such that:
\begin{itemize}
\item $\kappa(d)=0$ for leaf nodes.
\item $\kappa(d)=\max_{b\in G_d} \kappa(b)$ if $d$ is trivial.
\item Otherwise $\kappa(d)=1+\max_{b\in G_d} \kappa(b)$.
\end{itemize}
A tree satisfies the $k$-faults condition if $\kappa(d)\le k$ for all nodes $d$ in the tree.
\end{definition}
\noindent In particular, any tree such that any path from the root to a leaf encounters only $k$ fault nodes
is a $k$-fault tree.


The main theorem we prove is

\begin{restatable}{theorem}{mainthem}\emph{(Main Result)} \label{main}
Given a $k$-fault depth $n$ tree with each node evaluating a fixed direct boolean function
$f_D$, we can create an algorithm based on a span program $P_D$ corresponding to $f_D$ that evaluates
the root of the tree with $O(n^2\omega^k)$ queries, for some constant $\omega$ which depends only on $f_D$.
\end{restatable}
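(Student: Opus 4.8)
The plan is to route everything through the span-program formalism used by Reichardt and \v{S}palek: construct a span program $P_D$ computing the fixed base function $f_D$, compose it along the tree to obtain a span program $P$ for the root, bound the positive and negative witness sizes $W_+(P),W_-(P)$ of the composed program using the $k$-fault promise, and then run the span-program evaluation algorithm on $P$, whose bounded-error query complexity scales like $\sqrt{W_+(P)\,W_-(P)}$ up to the cost of implementing $P$. Since $f_D$ is \emph{direct}, it admits a canonical span program organized around the monotone path from $x_i$ to $x_f$; for threshold functions and their negations these coincide with the known optimal span programs for $\mathrm{THR}$, whose witness sizes on every input are explicit. First I would record two features of $P_D$: (i) on every \emph{fault} input $x$ the witness size in the direction $f_D(x)$ is at most a constant, and I would \emph{define} $\omega$ in terms of the worst such values (essentially the geometric mean of the largest positive-fault and largest negative-fault witness sizes); and (ii) by a judicious choice of input weights --- this is the ``small tweak'' to the Reichardt--\v{S}palek construction alluded to in the introduction --- the witness size on every \emph{trivial} input (in particular on $x_i$ and on $x_f$) can be made at most $1$ in the relevant direction.

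The heart of the proof is then an induction over the tree. For a node $d$ with value $b_d=\mathrm{Val}(d)$, write $W(d)$ for the witness size, in the direction $b_d$, of the sub-span-program of $P$ rooted at $d$. The span-program composition rules give $W(d)\le \mathrm{wsize}_{P_D}(a)\cdot\max_{i\in\mathrm{supp}(a)}W(d_i)$, where $a$ is an optimal top-level $b_d$-witness of $P_D$ for the induced input $(\mathrm{Val}(d_1),\dots,\mathrm{Val}(d_c))$, and crucially $\mathrm{supp}(a)\subseteq G_d$, the set of strong children --- exactly the children that the $\kappa$-recursion inspects. If $d$ is a leaf, $W(d)=1=\omega^{0}=\omega^{\kappa(d)}$. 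If $d$ is trivial, the induced input is an extreme/trivial input of $f_D$, so $\mathrm{wsize}_{P_D}(a)\le1$ by (ii) and hence $W(d)\le\max_{b\in G_d}W(b)=\omega^{\max_{b\in G_d}\kappa(b)}=\omega^{\kappa(d)}$. If $d$ is a fault, $\mathrm{wsize}_{P_D}(a)\le\omega$ by (i), so $W(d)\le\omega\cdot\max_{b\in G_d}W(b)=\omega^{1+\max_{b\in G_d}\kappa(b)}=\omega^{\kappa(d)}$. Specializing to $d=$ root and using that the promise forces $\kappa(\mathrm{root})\le k$ for \emph{every} allowed input (so the bound applies to the positive and the negative witness size of $P$ simultaneously), I conclude $W_+(P),W_-(P)\le\omega^{k}$.

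Finally I would invoke the span-program evaluation algorithm on $P$. Its query cost is $O\bigl(\sqrt{W_+(P)\,W_-(P)}\bigr)$ provided one can implement the associated walk/phase-estimation unitary to sufficient precision; but the composed program is very far from ``approximately balanced'' --- the tree has $c^{n}$ leaves while the witness sizes are only $\omega^{k}$ --- so the error budget must be apportioned over the $n$ levels of the composition, forcing each phase-estimation stage to run to precision $1/\mathrm{poly}(n)$ and costing an extra $\mathrm{poly}(n)$ factor; a careful accounting of this overhead brings the total to $O(n^{2}\omega^{k})$.

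The step I expect to be the real obstacle is (ii) together with this overhead accounting. The whole bound collapses to $\omega^{k}$ rather than, say, $\omega^{k}\gamma^{n}$ for some constant $\gamma>1$ only if a trivial node genuinely contributes a multiplicative factor $\le 1$, which requires normalizing $P_D$ so that its trivial-input witness sizes are $\le 1$ \emph{while} its fault-input witness sizes stay bounded by the fixed constant $\omega$ --- this is exactly where the directness hypothesis and the tweak to Reichardt--\v{S}palek's construction do the work, and it is the part I would expect to need the most care. Controlling the polynomial-in-$n$ implementation overhead for a composition of $n$ badly unbalanced span programs is more routine, but must be carried out honestly to land the $n^{2}$.
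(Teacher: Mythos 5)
Your structural picture of the $\omega^{k}$ factor matches the paper's: the witness-size recursion inspects only strong children (the paper's Claim~2.2 that weak-input costs drop out), so the $k$-fault promise caps the unit-cost witness size at $\omega^{k}$, and the rescaling you call step~(ii) is exactly what the paper's Theorem~2.5 supplies. You also correctly flag (ii) as the crux. One small correction on the definition: $\omega$ must be $\max_{\vec{x}}\mathrm{wsize}(P_D,\vec{x})$, not a geometric mean of worst positive-fault and negative-fault witness sizes, since a single root-to-leaf path can mix faults whose node values are $0$ and $1$; each such fault contributes up to the unconditional maximum.

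Where your argument has a genuine gap is the $n^{2}$ factor. You attribute it to implementing the composed walk to sufficient precision, ``apportioning the error budget over the $n$ levels'' and ``costing an extra $\mathrm{poly}(n)$ factor,'' but that is not how the Reichardt--\v{S}palek algorithm spends queries, and ``a careful accounting brings the total to $O(n^{2}\omega^{k})$'' asserts the conclusion rather than deriving it. The paper does not separate a clean witness-size bound from an implementation overhead; instead it propagates the \emph{subformula complexity} $z$, which satisfies $z \le c_1 + \mathrm{wsize}(P_D,\vec{x})\bigl(\max_{\text{strong}\,j}z_j\bigr)\bigl(1+c_2|E|\max_j z_j\bigr)$ with $z=1$ at leaves. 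The additive $c_1$ at each of the $n$ levels is what forces $z_{\mathrm{root}}=O(n\,\omega^{k})$ rather than $O(\omega^{k})$; and a \emph{second} factor of $n$ arises because one must choose $|E|=\Theta(n^{-2}\omega^{-k})$ so that $c_2|E|\max_j z_j=O(1/n)$ and the accumulated correction $(1+c_2|E|\max_j z_j)^{n}$ stays $O(1)$, after which the algorithm runs in $O(1/|E|)=O(n^{2}\omega^{k})$ queries. Both factors of $n$ therefore live inside the query-complexity recursion and the choice of the energy scale $|E|$, not in a separate walk-precision budget, and extracting them is not ``routine'': the induction hypothesis has to carry a slack term (the paper's $(1+c_2cc'/n)^{\bar{n}}$) precisely so the per-level $c_1$'s can be absorbed, which is the step your sketch elides.
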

A formula for $\omega$ will be given in Section \ref{SPsec}.


To prove the most generalized version of Theorem \ref{main}, we use the span-program
based quantum algorithm
of Reichardt and $\check{\rm{S}}$palek \cite{Reichardt2009}.
However, in Appendix \ref{NANDsec} we show that the original algorithm of Farhi et al.
\cite{Farhi2007} obtains a similar speedup, with the correct choice of parameters.


We use the quantum algorithm
of Reichardt and $\check{\rm{S}}$palek \cite{Reichardt2009}, which they only apply to total boolean evaluation trees (for a polynomial speed-up).
With our promise on the inputs, a tweak to their algorithm gives a super-polynomial speed-up. Their algorithm uses phase estimation of a quantum walk on weighted
graphs to evaluate boolean formulas. This formulation requires choosing a graph gadget to represent
the boolean function,
and then composing the gadget many times. The optimal gadget given the promise on the inputs may be different
from the one used in their original paper. While the original gadget is
chosen to optimize the worst case performance on any input, we should choose a possibly different gadget which is
very efficient on trivial nodes. This may increase
the complexity at faults, but this increase is bounded by the $k$-faults condition.

We also present a corresponding classical lower bound:
\begin{theorem}\emph{(Classical Lower Bound)}
\label{classthem}
Let $B$ be a (classical) randomized algorithm which finds the value of a depth $n$ tree composed
of direct boolean functions, satisfying the  $k$ faults condition for some $k < \text{polylog}(n)$.
If $B$ is correct at least 2/3 of the time on any distribution of inputs,
then the expected number of queries it makes is at least
$$\Omega \left((\log(n/k))^k \right)= \Omega \left( 2^{k\log \log (n/k) }\right).$$
\end{theorem}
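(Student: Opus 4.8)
The plan is to prove the bound by Yao's minimax principle: it suffices to exhibit a distribution $\mathcal{D}$, supported on depth-$n$ trees of the further-restricted form alluded to in the introduction that are $k$-fault, on which \emph{every deterministic} algorithm making fewer than $c\,(\log(n/k))^k$ queries is correct with probability at most $2/3$; since $B$ is assumed correct with probability $\ge 2/3$ against every input distribution, this forces its expected query count to be $\Omega((\log(n/k))^k)$. I construct $\mathcal{D}$ recursively. Partition the $n$ levels into $k$ consecutive \emph{blocks}, each of depth $m := n/k$. Inside a block every node is trivial (so it copies the value of its strong child) except for a single \emph{fault} node, and the hidden data of the block — which level the fault sits at, and which of its children is the weak one — is drawn at random and kept secret; the value propagated to the top of the block is a function of this hidden data together with the bit arriving from the top of the next block down, which enters the fault node as one of its inputs. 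For a general direct $f_D$ this is implemented using the two distinguished inputs on which $f_D$ takes values $0$ and $1$ and the monotone path between them: trivial gadgets sit on that path and the fault gadget is a single off-path flip. Composing the $k$ blocks makes the value of the whole tree a recursive-Fourier-sampling-like function of $k$ hidden bits, the $i$-th of which is, in effect, inaccessible until the hidden structure of blocks $i+1,\dots,k$ has been resolved.

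The core of the argument is a single-block adversary lemma combined with a multiplicative recursion. For one block of depth $m$ I will show, via an adversary that answers queries so as to keep as many candidate fault-configurations "alive" as possible with each query eliminating at most a constant fraction of them, that any deterministic procedure naming the block's output bit with advantage bounded away from $0$ must spend $\Omega(\log m)$ queries associated with that block. Let $Q(j)$ denote the worst-case number of queries needed, under $\mathcal{D}$ restricted to its bottom $j$ blocks, to output the tree value with probability $>2/3$. To resolve the top block one must know the bit arriving from below, i.e. one must already have (in effect) solved the $(j-1)$-block instance, and — as in RFS, where the inner instance is re-solved for each probe of the outer function — one must do this for $\Omega(\log m)$ distinct probes of the top fault in order to pin down its hidden data. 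This yields $Q(j)\ge \Omega(\log m)\cdot Q(j-1)$. The subtree hanging below the first block is itself a $(k-1)$-block instance of depth $n-m = n(k-1)/k$ with block-depth $(n-m)/(k-1)=n/k=m$, so $m$ is invariant under the recursion, and unrolling gives $Q(k)\ge \Omega((\log m)^k)=\Omega((\log(n/k))^k)$. Rewriting $(\log(n/k))^k = 2^{k\log\log(n/k)}$ gives the stated form; the hypothesis $k<\mathrm{polylog}(n)$ makes $\log(n/k)=\Theta(\log n)$, so the bound is genuinely super-polynomial.

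The step I expect to be the real obstacle is precisely the one the introduction flags: unlike RFS there are no per-level oracles, so a classical algorithm may query leaves anywhere in the tree and thereby extract \emph{partial} information about an internal node without having resolved the hidden structure above it, and queries belonging to different blocks are not cleanly separated. Making the adversary and the recursion rigorous therefore requires showing that such partial information is provably worthless until a full quota of $\Omega(\log m)$ block-queries has been committed to each block in turn. Concretely, I plan to maintain as an invariant along the algorithm's execution that, conditioned on the answers seen so far, the tree value is within $o(1)$ in total-variation distance of an unbiased coin unless $\gtrsim (\log m)^k$ queries have been made — tracking, for each block, a set of still-consistent hidden configurations and arguing, from the independence of the blocks' hidden data in $\mathcal{D}$ and a union bound over the at most $q$ queries made, that these sets shrink slowly and that no query can implicate more than one block's configuration beyond a controllable amount. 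Replacing RFS's oracle-enforced modularity with this direct book-keeping over a single combined tree is where the technical weight lies; Yao's principle, the block gadgets for a general direct $f_D$, and the closing arithmetic are routine by comparison.
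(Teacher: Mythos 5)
Your high-level architecture — Yao's principle, a recursive hard distribution built from depth-$m$ blocks each hiding the level of a ``split,'' a per-block cost of $\Theta(\log m)$, and a product over $k$ blocks — is exactly the paper's, and your diagnosis of where the difficulty lies (no per-level oracles, so partial information can leak across blocks) is also correct. Two concrete issues remain, one small and one fatal to the plan as written.

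The small one: inside a block the paper does not place ``a single fault node'' but an entire \emph{level} of faults (every node at the secret height $i$ is a fault, and for each such node the strong/weak assignment is chosen independently). With only one fault node, almost every root-to-leaf path misses it and is entirely trivial, so a single query reveals the block's output bit. The full fault level is what forces \emph{every} path through one fault, sets up the ``common ancestor above/at/below the split'' trichotomy, and makes locating $i$ a genuine noisy binary search.

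The fatal one is your mechanism for rigor. You propose to maintain, ``from the independence of the blocks' hidden data and a union bound over the at most $q$ queries,'' an invariant that the posterior on the root stays near uniform. But $q$ is super-polynomial in $m$, namely $q = \Theta((\log m)^k)$ with $k=\Theta(\log n)$, so $q\approx m^{\log\log m}$, whereas the bad event you need to exclude in a single block (a lucky binary search that pins down the split in $\le\beta$ queries) has probability only $m^{-\Theta(1)}$. The union bound then gives $q\cdot m^{-\Theta(1)}\to\infty$; it proves nothing. The paper flags this exact obstruction (``we cannot apply some union bound to say that rare events\dots never happen. Thus, we must follow rare events carefully throughout the proof'') and, in place of a union bound, runs a bi-criteria induction: it simultaneously bounds a confidence parameter $c_{k,l}$ and an exception probability $p_k$, where exceptions are classified into three types, and shows that low-probability lucky events, when they do occur, cause only a bounded, trackable increase in confidence. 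Two further pieces are needed that your sketch omits and that do not follow from independence plus bookkeeping: (i) the algorithm can spread a few queries over a huge number of subtrees and then invest only in the most promising ones, which breaks any naive per-subtree inductive bound — the paper handles this with explicit simulation arguments that repack an adversary's spread-out queries into $\beta$ consolidated subtrees and show the exception probability cannot drop much under the repacking; and (ii) your clean recursion $Q(j)\ge\Omega(\log m)\,Q(j-1)$ on query counts is not itself established; the paper instead fixes the single recursive distribution $\mathcal{T}_k$ and proves directly, via the $S$, $D$, $p_{\mathrm{cat}}$ machinery, that $\beta^k$ queries leave the confidence $D/S$ small except on a low-probability exception set. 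So the skeleton is right, but the load-bearing step you identify as ``the real obstacle'' is handled in the paper by a qualitatively different device than the union bound you propose, and without that device the argument does not go through.
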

When $k= \log n$ the number of queries is $n^{\Omega(\log \log n)}$, whereas the quantum
running time is $O(n^2\omega^{\log n}) = O(n^{2+\log\omega})$, which gives the speedup.

 In Section
\ref{SPsec} we will describe our quantum algorithm for $k$-fault direct boolean evaluation trees, based on the algorithm of Reichardt and $\check{\rm{S}}$palek \cite{Reichardt2009}, and prove Theorem \ref{main}. In Section
\ref{Classsec} we will sketch the proof of Theorem \ref{classthem} for the case of the
NAND tree. Full details of the lower bound proof can be found in the appendix.


\section{Quantum Algorithm} \label{SPsec}

\subsection{Span-Program Algorithm} Our algorithm is based on the formulation of \cite{Reichardt2009}, which uses span programs.
In this subsection, we will define span programs and the witness size,
a function that gives the query complexity of an algorithm derived from a given span program.

Span programs are linear algebraic ways of representing a boolean function.
We will define direct boolean functions, which are the functions
we use in our evaluation trees, based on their
span program representations.
(For a more general definition of span programs, see Definition 2.1 in \cite{Reichardt2009}).

\begin{definition}\emph{(Direct Boolean Function, defined in terms of its span program representation, adapted from Definition 2.1 in \cite{Reichardt2009})} \label{dirdef}
Let a span program $P_D$, representing a function $f_{D}(\vec{x})$, $\vec{x}=(x_1,\dots,x_c)$,
$x_j\in\{0,1\}$, consist
of a ``target" vector $t$ and ``input'' vectors $v_j: j\in\{1,\dots,c\},$ all of which are in
$\mathbb{C}^C$ $(C\in\mathbb{N}).$ Without loss of generality, we will always transform the program so that
$t=(1,0,$\dots$,0)$. Each $v_j$ is labeled by $\chi_j$, where $\chi_j$ is either $x_j$ or $\bar{x}_j$
depending on the specific function $f_D$ (but not depending on the specific input $\vec{x}$). The vectors $v_j$ satisfy the condition that $f_D(\vec{x})=1$ (i.e. true) if
and only if there exists a linear combination $\sum_ja_jv_j=t$ such that $a_j=0$
if $\chi_j$ is $0$ (i.e. false). We call $A$ the matrix whose columns are the
$v_j$'s of $P_D$: $A=(v_1,\dots, v_c)$. Any function that can be represented by such
a span program is called a direct boolean function.
\end{definition}

Compared to Definition 2.1 in \cite{Reichardt2009}, we have the condition that each input $x_j$ corresponds to
exactly one input vector - this ``direct" correspondance gives the functions their name.
As a result, for each direct boolean function there exists two special inputs,
$\vec{x}^{0}$ and $\vec{x}^1$, such that $\vec{x}^0$ causes all $\chi_j$ to be 0, and $\vec{x}^1$ causes all $\chi_j$ to be $1$. Note this means $\vec{x}^0$ and $\vec{x}^1$ differ at every bit, $f(\vec{x}^0)=0,$ $f(\vec{x}^1)=1$, and $f$ is monotonic
on every shortest path between the two inputs (that is, all paths of length $c$).



Threshold functions with threshold $h$
correspond to direct span programs where $\chi_j=x_j$ for
all $j\in[n]$, and where for any set of $h$ input vectors, but no set of $h-1$ input vectors,
there exists a linear combination that equals $t$. It is not hard
to show that such vectors exist. For threshold functions, $\vec{x}^0=(0,\dots,0)$ and
$\vec{x}^1=(1,\dots,1)$.


Any span program can lead to a quantum algorithm. For details, see Section 4 and Appendix B in
\cite{Reichardt2009}. The general idea is that a span program for a function $f$, with an input $\vec{x}$,
gives an adjacency matrix for
a graph gadget. When functions are composed, one can connect the gadgets to form larger graphs representing
the composed functions. These graphs have zero eigenvalue support on certain nodes only
if $f(x)=1$. By running phase
estimation on the unitary operator for a quantum walk on the graph, one can determine the value of the function with high probability.



Determining the query complexity of this
algorithm depends on the witness size of the span program:



\begin{definition} \emph{(Witness Size, based on Definition 3.6 in \cite{Reichardt2009})}
\label{wsize}
Given a direct boolean function $f_D$, corresponding span program $P_D$, inputs $\vec{x}=(x_1,\dots,x_c)$, $x_j\in\{0,1\}$,
and a vector $S$ of costs, $S\in[0,\infty)^c$, $S=(s_1,\dots,s_c)$,  let
$r_i\in\mathbb{C}^c$, $i\in\{0,\dots,C-1\}$ be the rows of $A$ and $\chi_j$ correspond to the columns
of $A$ (as in Definition \ref{dirdef}). Then the witness size is defined as follows:
\begin{itemize}
\item If $f_D(\vec{x})=1$, let $\vec{w}$ be a vector in $\mathbb{C}^c$ with components
$w_j$ satisfying $\vec{w}^\dagger r_{0}=1$, $\vec{w}^\dagger r_i=0$ for $i>1$, and
$w_j=0$ if $\chi_j=0.$ Then
\begin{equation}
\mathrm{wsize}_S(P_{D},\vec{x})=
\mathrm{min}_{\vec{w}}\sum_j s_j|w_j|^2.
\end{equation}
\item If $f_D(\vec{x})=0$, let $\vec{w}$ be a vector that is a linear combination of $r_i$,
with the coefficient of $r_0$=1, and with $w_j=0$ if $\chi_j=1$. Then
\begin{equation}
\mathrm{wsize}_S(P_{D},\vec{x})=
\mathrm{min}_{\vec{w}}\sum_j s_j|w_j|^2.
\end{equation}
\end{itemize}
\end{definition}
\begin{claim}
This definition is equivalent to the definition of witness size given in Definition 3.6 in \cite{Reichardt2009}. The reader can verify that we've replaced the dependence of the witness size on $A$ with a more explicit dependence on the rows and columns of $A$. For the case of $f(\vec{x})=0$ we use what they call
$A^\dagger w$ as the witness instead of $w$.
\end{claim}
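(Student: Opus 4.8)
The plan is a direct reconciliation of notation. I will recall the positive- and negative-witness formulations of $\mathrm{wsize}_S$ from Definition 3.6 of \cite{Reichardt2009} and check, in each of the two cases $f_D(\vec x)=1$ and $f_D(\vec x)=0$, that Definition \ref{wsize} minimizes the same objective $\sum_j s_j|w_j|^2$ over the same feasible set. Since both definitions are stated as such a minimum, it suffices to exhibit in each case a correspondence between the feasible $\vec w$'s of Definition \ref{wsize} and the feasible witnesses of \cite{Reichardt2009} that preserves the value of $\sum_j s_j|w_j|^2$.

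For $f_D(\vec x)=1$, Definition 3.6 of \cite{Reichardt2009} asks for $|u\rangle\in\mathbb{C}^c$ with $A|u\rangle=t$ and $u_j=0$ whenever $\chi_j$ is false on $\vec x$, with witness size $\sum_j s_j|u_j|^2$. Expanding $A|u\rangle=t$ row by row and using $t=(1,0,\dots,0)$ gives the conditions $\vec w^\dagger r_0=1$ and $\vec w^\dagger r_i=0$ for the remaining rows, together with $w_j=0$ on the false literals --- which are exactly the feasibility conditions of Definition \ref{wsize} --- because $\vec w^\dagger r_i$ equals the $i$-th entry of $A\overline{\vec w}$, so that the system says $A\overline{\vec w}=t$. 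Replacing $\vec w$ by $\overline{\vec w}$ then turns a feasible vector of one definition into a feasible witness of the other with the same support and the same value of $\sum_j s_j|w_j|^2$, so the two minima agree.

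For $f_D(\vec x)=0$, a negative witness of \cite{Reichardt2009} is a vector $w'\in\mathbb{C}^C$ with $\langle w'|t\rangle=1$ and $\langle w'|v_j\rangle=0$ for every $j$ with $\chi_j$ true on $\vec x$, and its witness size is $\sum_j s_j|\langle w'|v_j\rangle|^2$ (the true-literal terms vanish by the constraint). As our witness we take the vector $\vec w$ with $w_j:=\langle w'|v_j\rangle$ --- equivalently $\vec w=\langle w'|A$ read as a column, which is the $A^\dagger w'$ the claim refers to up to a complex conjugate --- so that $\sum_j s_j|w_j|^2$ is exactly the \cite{Reichardt2009} witness size. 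Then $\vec w=\sum_i\overline{w'_i}\,r_i$ is a linear combination of the rows of $A$, its coefficient on $r_0$ equals $\langle w'|t\rangle=1$ since $t=(1,0,\dots,0)$, and $w_j=0$ precisely when $\chi_j$ is true on $\vec x$; these are the feasibility conditions of Definition \ref{wsize}. Conversely every $\vec w$ feasible there is of this form for a suitable $w'$. As the common objective depends on $w'$ only through $\vec w$, minimizing over $w'$ or over $\vec w$ gives the same number.

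I do not expect a genuine obstacle: the claim is flagged as one the reader can verify, and the only care required is bookkeeping. One must keep track of the Hermitian-inner-product conventions (the conjugations appearing above) and, in the negative case, observe that $w'\mapsto\vec w$ need not be injective and that the representation $\vec w=\sum_i c_i r_i$ need not be unique when the rows of $A$ are linearly dependent. Neither point affects the conclusion, since the objective $\sum_j s_j|w_j|^2$ and the constraint ``the coefficient of $r_0$ is $1$'' are well defined on the image of the map and that map is onto the constrained set; alternatively, one may first put $P_D$ into a normal form of \cite{Reichardt2009} in which $A$ has full row rank, making the correspondence an honest bijection.
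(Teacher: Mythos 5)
Your verification is correct, and it fills in exactly the bookkeeping the paper delegates to the reader: the paper states the claim and says the reader can verify it, offering no proof beyond the two hints (rows-and-columns rewriting; $A^\dagger w$ as the negative witness), so your write-out is the intended argument rather than an alternative one. The two points you flag — the complex conjugations introduced by writing the constraints as $\vec w^\dagger r_i$ rather than $(A\vec u)_i$, and the non-injectivity of $w'\mapsto A^\dagger w'$ together with the possible non-uniqueness of the representation of $\vec w$ as $\sum_i c_i r_i$ — are the genuine places one could slip, and you dispose of them correctly (the objective and the "coefficient of $r_0$ equals $1$" constraint both factor through $\vec w$). One small editorial note: the paper's condition "$\vec w^\dagger r_i = 0$ for $i>1$" is evidently a typo for $i\ge 1$, and your reading of it as "the remaining rows" is the right one; you might want to say so explicitly when invoking it.
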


{\bf{Notation:}} If $S=(1,\dots, 1)$, we leave off the subscript $S$ and write $\text{wsize}(P_D,\vec{x})$.

Now we will introduce a quantity called the subformula complexity. To simplify this paper, we will not go into the precise definition, which is not important for us, but rather focus on the relations between this quantity,
the witness size of a function, and the query complexity of that function. (If you would like to know more
about the subformula complexity, see Definitions 3.1 and 3.2 in \cite{Reichardt2009}. The following is adapted from Section 3 and 4.3 in \cite{Reichardt2009}).


Suppose we have a complete boolean evaluation tree.
We choose $|E|$ such that $1/|E|$ is larger
than the query complexity of the full tree. (Note, that since
the query complexity is normally somewhat large,
$|E|$ is a fairly small quantity.) We choose a definite value for $|E|$ later,
after partially calculating the
query complexity.
Consider a subtree within the larger tree
that evaluates the function $h(\vec{x})$, where $\vec{x}$ are literal inputs, (i.e.
the inputs are not the outputs of other functions). If $h$ has
span program $P_h$, then the subformula complexity $z$ of $h(\vec{x})$ is related to the
witness size by
\begin{equation}
z\leq c_1+\text{wsize}(P_h, \vec{x})(1+c_2|E|),
\end{equation}
where $c_1$ and $c_2$ are constants. The function $h$ is rooted at some node $v$. We will often call the
subformula complexity of $h$ the complexity of the node $v$. If $v$ is a leaf (i.e. a literal input), it has subformula complexity $z=1$.

Now we want to consider the subformula complexity of a composed formula:
$f=g(h(\vec{x}^1),\dots,h(\vec{x}^c))$. Let $P_g$ be a span program for $g$.
Let $z_j$ be the subformula complexity of $h(\vec{x}^j)$, and
$Z=(z_1,\dots,z_c)$.
Then the subformula complexity of $f$ is bounded by
\begin{eqnarray}
\label{wsizecomp}
z&\leq& c_1+\text{wsize}_Z(P_g,\vec{x})(1+c_2|E|\max_jz_j) \notag \\
&\leq &c_1+\text{wsize}(P_g,\vec{x})\max_jz_j(1+c_2|E|\max_jz_j).
\end{eqnarray}




Using this iterative formula, we can upper bound the subformula
complexity at any node in our formula.
We now choose $E$ such that
$E\ll z_v$ for all $v$, where $z_v$ is the subformula complexity at node $v$.
Then there exists (Section 4.4 from \cite{Reichardt2009}) a quantum algorithm to evaluate
the entire tree using $O(1/E)$ queries to the {\it{phase-flip input oracle}}
\begin{equation}
O_a:|b,i\rangle\rightarrow(-1)^{b\cdot a_i}|i\rangle,
\end{equation}
where $a_i$ is the value assigned to the $i^{th}$ leaf of the tree \cite{Reichardt2009}.

\subsection{$k$-fault Trees}\label{kfaultt}

In this section, we will create a span program-based quantum
algorithm for a $k$-fault tree composed of a single direct boolean function
with span program $P_D$, which requires
$O(n^2\omega^k$) queries, where $\omega=\mathrm{max}_{\vec{x}}\mathrm{wsize}(P_{D},\vec{x})$.
First, we will be more precise in our definitions of
{\it{trivial, fault, strong,}} and {\it{weak}}. Suppose we have a tree $T$
composed of the direct boolean function $f_D$, represented by the span program
$P_D$. Then,

\begin{definition} \emph{(Trivial and Fault)} A node in $T$ is trivial if it has input $\vec{x}$ where
$\mathrm{wsize} (P_{D},\vec{x})=1.$ A node in $T$ is a fault if it has input $\vec{x}$ where
$\mathrm{wsize} (P_{D},\vec{x})>1.$
\end{definition}
\noindent In calculating the query complexity of a boolean evaluation tree,
we multiply the witness sizes of the individual functions. Thus, to first order, any
node with $\text{wsize}=1$ doesn't contribute to the query complexity, and so is trivial.
We will show later (Thm. \ref{triv}) that for
direct boolean functions, we can always create
a span program that is trivial for inputs $\vec{x}^0$ and $\vec{x}^1$.

\begin{definition} \emph{(Strong and Weak)}
Let a gate in $T$ have inputs $\vec{x}=(x_1,\dots,x_c)$, and input labels $(\chi_1,\dots,\chi_c)$. Then
the $j^{th}$ input is strong if $f_D(\vec{x})=\chi_j$,
and weak otherwise.
\end{definition}

It should be clear that this definition of strong and weak agrees with the one for threshold
functions given in the introduction.


\begin{claim}
The costs $s_j$ corresponding to weak inputs do not affect the
witness size.
\end{claim}

\begin{proof}
From the definition of witness size (Definition \ref{wsize}), for $f_D(\vec{x})$ equals $0$ or $1$,
 we require $w_j=0$ for all $j$ where $\chi_j\neq f_D(x)$. This
means $s_j$ is multiplied by 0 for all such $j$, and therefore does not effect
the witness size.
\end{proof}

Now we can restate and prove Theorem \ref{main}:
\mainthem*
\begin{proof}
The basic idea is that to the first approximation, the query complexity can be
calculated by considering one path from root to leaf, taking the product
of all of the $\text{wsize}(P_D,\vec{x})$ that that path hits, and then
taking the maximum over all paths. (This can be seen from the second
line of Eq. \ref{wsizecomp}).  The condition on the maximum number of faults along
each path then gives a bound on the complexity throughout the tree, corresponding
to the factor of $\omega^k$ in the query complexity. It remains to
take into consideration corrections coming from $c_1$ and $c_2$ in Eq. \ref{wsizecomp}.

With our insight into strong and weak inputs, we can rewrite Eq. \ref{wsizecomp} as
\begin{align}\label{z0form} z\leq c_1+\mathrm{wsize}(P_{D},\vec{x})
(\mathrm{max}_{\mathrm{strong}\text{ }j}z_j)(1+c_2|E|\mathrm{max}_jz_j).
\end{align}
Note this applies to every node in the tree. Let the maximum energy $|E|$ equal $cn^{-2}\omega^{-k}$, where
$c$ is a constant to be determined. We will show that with this value,
the term $c_2|E|\mathrm{max}_jz_j$ will always be small, which allows us to explicitly calculate
the query complexity.

We will prove by induction that
\begin{equation}
z<c'(\bar{n}\omega^\kappa)(1+c_2cc'/n)^{\bar{n}}
\end{equation}
 for
each subtree rooted at a node at height $\bar{n}$,
where from Def. \ref{defk} $\kappa$ is an integer
assigned to each node based on the values of $\kappa$ at its child nodes and whether those nodes
are strong or weak. Here $c'$ is a constant larger than 1, dependent on $c_1$, and $c$ is chosen
such that $c_2cc'\ll 1$ and $c'c\ll 1$.

For leaves, $z=1<c'$, so the above inequality holds in the base case.
Consider a node with $\kappa=\eta$ and height $\bar{n}$. Then all input nodes have height
$\bar{n}-1$. Weak inputs have $\kappa \le k$ (notice the weak input subformula
complexities only show up in the last term $\mathrm{max}_jz_j$). If the
node is trivial then strong inputs have $\kappa \le \eta$, and if the node is a fault
then strong inputs have $\kappa \le \eta-1$. Assuming the appropriate values of $z_j$ based
on our inductive assumptions, then for the case of a trivial node,
Eq. \ref{z0form} gives
\begin{align} \label{trivcase}
z&\le c_1+c'(\bar{n}-1)\omega^\eta(1+c_2cc'/n)^{\bar{n}-1}(1+c_2cc'/n) \nonumber \\
&<c'\bar{n}\omega^\eta(1+c_2cc'/n)^{\bar{n}}.
\end{align}
Here we see that $c'$ is chosen large enough to be able to subsume the $c_1$ term into
the second term. For fault nodes, the bound on the complexity of inputs has an extra factor of $\omega^{-1}$
compared with the trivial case in Eq. \ref{trivcase},
which cancels the extra factor of $\omega$ from wsize, so the induction step holds in
that case as well.


For a $k$-fault, depth $n$ tree, we obtain $z_O<c'(n\omega^k)(1+c_2cc'/n)^n$ for all
nodes in the tree. Notice since $|E|=cn^{-2}\omega^{-k}$, $|E|\ll z$ for all nodes.
Based on the discussion following Eq. \ref{wsizecomp}, this means that
 the number of queries required by the algorithm is of order $1/|E|$, which is $O(n^2\omega^k)$.
\end{proof}

\subsection{Direct Boolean Functions}
The reader might have noticed that
our quantum algorithm does not depend on the boolean function being a
direct boolean function, and in fact, the algorithm applies to any boolean function
since any boolean function can be represented by a span program for which
at least one of the inputs is trivial. However, if the span program
is trivial for only one input, then it it is impossible to limit the number
of faults in the tree. Thus to fully realize the power of this promise,
we must have a boolean function $f$ with a span program that is trivial
on at least two inputs, $\vec{x}$ and $\vec{y}$
such that $f(\vec{x})=0$ and $f(\vec{y})=1$. This requirement
is also necessary to make the problem hard classically. In this section
we will now show that direct boolean functions satisfy this condition.

\begin{theorem} \label{triv}
Let $f_D$ be a direct boolean function, and let $P_D$ be a span program
representing $f_D.$ Then we can create a new span program $P_D'$ that also
represents $f_D$, but which is trivial on the inputs $\vec{x}^0$ and $\vec{x}^1$.
This gives us two trivial nodes, one with output 1, and the other with output 0.
\end{theorem}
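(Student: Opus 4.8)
The plan is to leave the underlying function untouched and to obtain $P_D'$ from an arbitrary span program $P_D$ for $f_D$ by a single global rescaling of all the input vectors; the real content is a duality identity relating the positive witness size at $\vec{x}^1$ to the negative witness size at $\vec{x}^0$. First I would record the two relevant witness sizes explicitly. Since $\vec{x}^1$ makes every $\chi_j=1$, no availability constraint is active, so by Definition~\ref{wsize} the positive witness size is $W_1:=\mathrm{wsize}(P_D,\vec{x}^1)=\min\{\|\vec{a}\|^2: A\vec{a}=t\}$. Since $\vec{x}^0$ makes every $\chi_j=0$, the constraint ``$w_j=0$ if $\chi_j=1$'' is vacuous, and unwinding the row-combination description of the negative witness gives $W_0:=\mathrm{wsize}(P_D,\vec{x}^0)=\min\{\|A^\dagger\vec{u}\|^2:\vec{u}^\dagger t=1\}$. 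Both are finite and strictly positive: $W_1$ because $f_D(\vec{x}^1)=1$ forces $t\in\mathrm{range}(A)$ (and $t\neq 0$), and $W_0$ because $\vec{u}=t$ is feasible while any optimal $\vec{u}$ must have a nonzero component in $\mathrm{range}(A)$, on which $A^\dagger$ is injective.

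Next I would establish the identity $W_0W_1=1$. Set $M=AA^\dagger$. The minimum-norm solution of $A\vec{a}=t$ shows $W_1=t^\dagger M^{+}t$, and minimizing $\vec{u}^\dagger M\vec{u}$ subject to $\vec{u}^\dagger t=1$ — which we may restrict to $\vec{u}\in\mathrm{range}(A)=\mathrm{range}(M)$, where $M$ is invertible and $t$ lives — gives $W_0=1/(t^\dagger M^{+}t)$ by a one-line Lagrange-multiplier computation (equivalently, Cauchy--Schwarz in the inner product induced by $M^{+}$). Hence $W_0W_1=1$.

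I would then define $P_D'$ to have the same target $t=(1,0,\dots,0)$, the same labels $\chi_j$, and input vectors $v_j'=v_j/\mu$ with $|\mu|^2=W_0=1/W_1$. Scaling all input vectors by a common nonzero scalar does not change which subsets of them span $t$, so $P_D'$ still represents $f_D$ and is still a direct boolean function. Under this rescaling every positive witness size is multiplied by $|\mu|^2$ and every negative witness size by $1/|\mu|^2$, so $\mathrm{wsize}(P_D',\vec{x}^1)=|\mu|^2W_1=1$ and $\mathrm{wsize}(P_D',\vec{x}^0)=W_0/|\mu|^2=1$. Thus $\vec{x}^1$ and $\vec{x}^0$ are trivial for $P_D'$, and since $f_D(\vec{x}^1)=1$ and $f_D(\vec{x}^0)=0$ this exhibits a trivial node of each output value.

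The step I expect to demand the most care is the identity $W_0W_1=1$: the clean statement $W_1=t^\dagger M^{+}t=1/W_0$ requires passing to $\mathrm{range}(A)$ as the ambient space (legitimate precisely because $t\in\mathrm{range}(A)$) so that the pseudoinverse behaves like an inverse, and it requires keeping track of complex conjugates when translating the ``linear combination of rows with the coefficient of $r_0$ equal to $1$'' description of the negative witness into the form $\min\{\|A^\dagger\vec{u}\|^2:\vec{u}^\dagger t=1\}$. The remaining ingredients — the multiplicative behaviour of witness sizes under a global scalar, and the invariance of the computed function and of directness under that scalar — are routine.
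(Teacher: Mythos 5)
Your proof is correct, but it takes a genuinely different route from the paper's. The paper first replaces $r_0$ by $(\mathbb{I}-\Pi_{R_1})r_0$, arguing this orthogonalization does not change the represented function or the witness sizes; after that step the optimal witness for each of $\vec{x}^0,\vec{x}^1$ is visibly proportional to $r_0$ with norms $|r_0|$ and $1/|r_0|$ respectively, so scaling $r_0$ to unit norm finishes. You instead leave the row structure entirely alone, prove the reciprocity $\mathrm{wsize}(P_D,\vec{x}^0)\cdot\mathrm{wsize}(P_D,\vec{x}^1)=1$ directly via the Moore--Penrose identity $W_1=t^\dagger(AA^\dagger)^+t$ and a Lagrange/Cauchy--Schwarz computation for $W_0$, and then apply a single global rescaling $v_j\mapsto v_j/\mu$. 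The two modifications produce different span programs in general (the paper alters one row and the embedding of the columns; you alter only an overall scale), but both are legitimate and both hinge on the same reciprocity, which in the paper appears as a byproduct of the orthogonalization and in your version is isolated as a standalone duality identity. The paper's route is more elementary (no pseudoinverse, and the reciprocity becomes a two-line inspection once $r_0\perp R_1$); your route is more modular, makes the duality fact explicit, and keeps the transformation to a single scalar. Both are fine; your bookkeeping of which witness lives in $\mathrm{range}(A)$, the feasibility/positivity checks, and the scaling behaviour $W_1\mapsto|\mu|^2W_1$, $W_0\mapsto W_0/|\mu|^2$ are all correct.
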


\begin{proof}

If $P_D$ is our original span program with rows $r_i$ as in Definition \ref{wsize},
then we make a new span program $P_D'$ by changing $r_0$ to be orthogonal
to all other $r_i$'s for $i\geq1$. To do this, we let $R_1$ be the subspace spanned by the rows
$r_i$, $i\geq1$. Let $\Pi_{R_1}$ be the projector onto that subspace. Then we
take $r_0\rightarrow (\mathbb{I}-\Pi_{R_1})r_0$. Now $r_0$ satisfies $ r_i^\dagger r_0=0$ for $i>0$.
Looking at how we choose $\vec{w}$ in Def. \ref{wsize}, this transformation does not affect our choice of $\vec{w}$.
(For $f_D(\vec{x})=1$, $\vec{w}$ has zero inner product with elements of the subspace spanned by the
rows $r_i$, $i\geq1$, so taking those parts out of $r_0$ preserves $\vec{w}^\dagger r_0=1$,
which is the main constraint on $\vec{w}$. For $f_D(\vec{x})=0$, $\vec{w}$ is a sum of $r_i$ with $r_0$
having coefficient 1. But the part of $r_0$ that is not in the subspace $R_1$ will still have coefficient $1$, and we are free to choose the
coefficients of the $r_i$ ($i>0$) terms to make up the rest of $\vec{w}$ so that it is the same as before.)
Hence there is no effect on the witness size or the function represented by the span program.

We can now divide the vector space $R$ of dimension $c$ into three orthogonal subspaces:
\begin{equation}
R=(r_0)\oplus R_1\oplus R_2,
\end{equation}
where $(r_0)$ is the subspace spanned by $r_0$, $R_1$ is the subspace spanned
by $r_i$ for $i>0$, and $R_2$ is the remaining part of $R$. We can also write
part of the conditions for $\vec{w}$ in Claim \ref{wsize} in terms of these subspaces: if
$f_D(\vec{x})=1$, then $\vec{w}\in (r_0)\oplus R_2$, and if $f_D(\vec{x})=0$ then $\vec{w}\in (r_0)\oplus R_1.$

In both of these cases, the remaining $\vec{w}$ with the minimum possible length is proportional
to $r_0$, with length $|r_0|$ and $1/|r_0|$. When $\chi_j=0$ for all $j$, or $\chi_j=1$ for all $j$, i.e.
for inputs $\vec{x}^0$ and $\vec{x}^1$, there are no further restrictions
on $\vec{w}$, so these are the actual witness sizes. This shows one of the witness sizes for trivial
inputs must be at least 1. Seting $|r_0|=1$ by multiplying $r_0$ by a scalar, we can obtain both
witness sizes equal to 1. This gives a span program for $f_D$ such that for $\vec{x}^0$ and $\vec{x}^1$, $\mathrm{wsize} (P_{D},\vec{x})=1$.
\end{proof}

We have shown that we can make a span program for $f_D$ with inputs
$\vec{x}^0$ and $\vec{x}^1$
trivial.  However, the final scaling step that sets $|r_0|=1$ may increase the witness
size for other inputs (faults) compared to the original span program.
 Since we are limiting the number of faults, this doesn't hurt our query complexity.


\section{Classical Lower Bound Sketch}\label{Classsec}

The classical lower bound is proven by induction. We here give an overview of the proof for the NAND function. A more formal version (suited for a wider class of functions) with full details
is in Appendix \ref{apen}.

Consider  a boolean evaluation tree composed of NAND functions, of depth $n$, with no faults. The root and leaves all have the same value if and only if $n$ is even. Now,  suppose that every node at height $i$ is faulty, and there are no other  faults in the tree. In this case, all nodes at height $i$ have  value $1$ and the root has value $1$ if and only if $n - i$ is even.  Since $n$ is known, to determine the value of the root, an  algorithm has to determine the parity of $i$.

Let $T_1$ be the distribution on trees of depth $n$, which first picks $i$, the height of the fault (called also the split of the tree), uniformly at random\footnote{For other functions the basic construction still chooses a split, but the split no longer consists of one level of faults. Instead there is a constant number of faults in each split. This does not change the asymptotic behavior. The appendix uses the more general construction, which does not map to this one for NAND functions (for example, if one were to apply the general construction to the  NAND function, the split could involve two levels).}, and then picks at random, for each node at height $i$, which one of its children has the value $0$, and which one has the value $1$. Consider  two leaves $u,v$ which the algorithm queries. If their common ancestor  has height less than $i$, $u$ and $v$ will have the same value. If the  height is more than $i+1$, the probability that they have the same value  is $0.5$, and if the height is exactly $i$, then the values will  differ. Thus, by querying different leaves, the algorithm performs a (noisy) binary search, which enables it to determine the parity of $i$.

It is easy to show that determining the parity of $i$ is not much easier than determining $i$. In fact, we show that for every  algorithm which gets a tree from the above distribution, if the  algorithm queries at most $\beta = \log n / 5$ leaves, the probability  that it can make a guess which has an advantage better than $n^{-1/5}$  over a random guess is at most $n^{-1/5}$.  Note that this type of bi-criteria guarantee is necessary: an algorithm could (for example) ask  the leftmost leaf, and then $\beta$ leaves which have common ancestors  at height $1, 2, \dots, \log n/5$ from that leaf. If the algorithm uses this  type of strategy, it has a small chance to know exactly in which height  the fault (the split) occurred: it will find it in all trees in which it occurs in  the first $\beta$ levels. On the other hand, the algorithm could  perform a regular binary search, and narrow down the range of possible  heights. This will give the algorithm a small advantage over a random  guess, regardless of the value of $i$.

Since we have algorithms with a super-polynomial number of queries, we cannot apply some union bound to say that  rare events (in which a dumb algorithm makes a lucky guess and learns a value with certainty) never happen. Thus, we must follow rare events carefully throughout the proof.

The distribution we use for trees with multiple faults is recursive. We define a distribution $T_k$ on trees of height $nk$ which obey the $k$ fault rule\footnote{As $n$ is much  larger than $k$, the assumption that the height of the tree is $n k$ and  not $n$ does not alter the complexity significantly - one can just  define $m = n/k$, and the complexity will be $m^{O(\log \log m)}$ which  is also $n^{O(\log \log n)}$ for $k = O(\log n)$.}. We begin by sampling $T_1$ (remember $T_1$ is the distribution of the trees of height $n$ which obey the $1$-fault rule, and which have faults exactly at every node at height $i$).  Then, we replace each one of the leaves of this depth $n$ tree with a  tree of depth $n(k-1)$, which is sampled from the distribution $T_{k-1}$, where we  require that  the sampled tree has the correct value at the root (the  value at the root of the tree of depth $n(k-1)$ needs to be identical to  the value of the corresponding leaf of the tree of depth $n$). Note  that the tree generated this way is identical to the one generated if we took a depth  $n(k-1)$ tree generated recursively (sampling $T_{k-1})$, and expanded each one  of its leaves to a tree of depth $n$ (sampling $T_1$). However, it is easier to  understand the recursive proof if we think of it as a tree of depth $n$,  in which each leaf is expanded to a tree of depth $n(k-1)$.

We now present the inductive claim, which is also a bi-criteria. Since we will apply the inductive assumption
to subtrees, we need to consider the possibility that we already have some a priori knowledge about the value
at the root of the tree we are currently considering. Moreover this prior can change with time. Suppose that at a given time, we have prior information which says the value of the root of a $T_k$ tree is $1$ with some probability $p$. We prove that if the algorithm has made less than $\beta^l$ queries on leaves of the $T_k$ tree, for $l\leq k$, then the  probability that it has an advantage greater than $c_{k,l} <O(2^{k-l+1}n^{-3(k-l+1)/5}\beta^{4+6(k-l)})$ over  guessing that the root is $1$ with probability $p$, is at most $p_{k} <O( 3n^{-1/5})$. Note that the probability that the algorithm guesses the root correctly varies as $p$ varies. We require that with high probability during all that process that the advantage the algorithm gets by using the information from leaves below the root of the $T_k$ tree is bounded by $c_{k,l}$.

The proof requires us to study in which cases the algorithm has an advantage greater than $c_{k,l}$ when guessing a node, although it asked less than $\beta^{l}$ questions. We say that an ``exception'' happened at a vertex $v$, if one of the  following happened:

\begin{enumerate}[I]
   \item There was an exception in at least two depth $n$ descendants.
  \item   ``The binary search succeeds:'' We interpret the queries made by  the  algorithm as a binary search on the value of the root, and it  succeeds  early.
   \item There was one exception at a depth $n$ descendant, when there was  some strong prior (far from 50/50) on the value of this child, and the  value discovered is different than the value which was more probable.
\end{enumerate}

Every type III Exception is also a type II, and a type I exception requries at least two type I, II, or III exceptions in subtrees. We show that if the algorithm has an advantage of more than $c_{k,l}$ at a certain node with less than $\beta^{l}$ queries, then an exception must have occured at that node.

The proof is by induction. First we show that the probability of an expection at a node of height $n k$ is bounded by $p_k$. The second part bounds $c_{k,l}$, by conditioning on the (probable) event that there was no exception at height $nk$.  Applying the inductive claim with $k$ equal to the maximum number of faults in the original
tree and no a priori knowledge of the root gives the theorem.

The first part of the proof is to bound the probability of exception, $p_k$. For type II exceptions this
bound comes from the probability of getting lucky in a binary search, and does not depend on the size of the tree. Since type I exceptions depend on exceptions on subtrees, intuitively we can argue that
since the probability of exception on each subtree is small from the $k-1$ case, the probability for a type I is also small.
A technical problem in making this rigorous is that the algorithm has time to evaluate a few leaves in
a large number of subtrees, then choose the ones that are more likely to give exceptions to evaluate
further. Then the inductive bound from the $k-1$ case no longer applies to the subtrees with the most
evaluations. The solution used in the paper is to simulate $A$, the original algorithm, by another
algorithm $B$, that combines the evaluations by $A$ into at most $\beta$ subtrees, and then argue that
the probability of exception in $B$ cannot be much lower than that in $A$. This way, the inductive
bound on $p_{k-1}$ can be applied directly.

For type III exceptions, we note that at each step where a type III exception is possible, the probability
of exception at the subtree is cut by a constant factor $C$. Intuitively, this means it
is $C$ times less likely to have a type III exception. Another simulation is used to
formulate this proof rigorously.

For  the second part, we consider a tree of height $n$ from our distribution rooted at a node $u$  (which need not be the root of the entire tree), where each one of its  leaves is a root of a tree of height $n(k-1)$. Let $L$ denote the set of  nodes at depth $n$, which are each roots of trees of height $n(k-1)$.  The algorithm is allowed $\beta^{l}$ queries, and thus there are at  most $\beta$ nodes in $L$ which are the ancestors of $\beta^{l- 1} $  queries, at most $\beta^2$ nodes which are the ancestors of $\beta^{l  - 2}$ queries etc. Suppose on each node $v \in L$ which received  between $\beta^{j-1}$ and $\beta^j$ queries, we know the information  $c_{k-1,j}$. Suppose also, that one of the nodes in depth $L$ is known,  although it was queried an insufficient number of times (since we condition on the fact that there is no exception at height $nk$, this assumption is valid). Adding the  information we could obtain from all of these
possible queries gives the bound on $c_{k,l}$.

\section{Conclusions}
We have shown a restriction on the inputs to a large class of boolean formulas results in a super-polynomial quantum speed-up. We
did this by examining existing quantum algorithms to find appropriate restrictions on the inputs.
Perhaps there are other promises (besides the $k$-faults promise) for other classes of functions (besides direct boolean functions) that could be found in this way. Since we are interested in understanding which restrictions lead to super-polynomial speed-ups, we hope to determine if our restricted set of inputs is the largest possible set
that gives a super-polynomial speed-up, or whether the set of allowed inputs could be expanded.

The algorithm given here is based on the fact that there is a relatively large gap around zero in the spectrum
of the adjacency matrix of the graph formed by span programs.
That is, the smallest magnitude of a nonzero eigenvalue is polynomial in $n$, or logarithmic in the size of
the graph. From the analysis of span programs of boolean functions
we now have a rather good idea about what kind of tree-like graphs have this property.
Furthermore Reichardt has found a way to collapse such graphs into few level graphs that have
this same spectrum property \cite{Reichardt2009a}, giving some examples with large cycles.
One problem is to determine more generally which graphs have this property about their spectrum.


\section{Acknowledgments}
We thank Eddie Farhi, Jeffery Goldstone, Sam Gutmann, and Peter Shor for many insightful discussions.


\bibliography{superpbib}
\bibliographystyle{abbrv}

\appendix
\section{Intuition from the NAND-Tree} \label{NANDsec}

For those unfamiliar with span programs, which form the basis of our general
algorithm, we will explain the key ideas in the setting of NAND
trees from Farhi et al. \cite{Farhi2007}. In their algorithm a
continuous time quantum walk is performed along a graph, where amplitudes on the nodes of the
graph evolve according to the Hamiltonian
\begin{equation} \label{Hamil}
H|n\rangle=\sum_{\mathrm{n_i: \text{ }neighbors\text{ }of\text{ }n}}-|n_i\rangle.
\end{equation}
The NAND function is encoded in a Y-shaped graph gadget, with amplitudes of the
quantum state at each node labelled by $u_1$, $u_2$, $w$ and $z$, as shown in Figure \ref{NANDY}.
\begin{figure}[h]
\center\includegraphics[scale=.4]{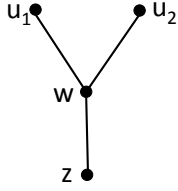}
 \caption{Graph gadget for the NAND function, with amplitudes labelled by $u_1$, $u_2$, $w$ and $z$.}
 \label{NANDY}
  \end{figure}

For an eigenstate of the graph with eigenvalue $E>0$, associate to each node the ratio of the
amplitude at the parent of that node to the amplitude at the node itself. For example,
in Fig. \ref{NANDY}, the ratios associated with the top-most nodes are called input ratios, and have values
$y_1(E)=\frac{u_1}{w}$ and $y_2(E)=\frac{u_2}{w}$.
Then using the Hamiltonian in Eq. \ref{Hamil}, we can find the output ratio (associated with the central node)
$y_{0}(E) =\frac{w}{z}$ in terms of the input ratios:
\begin{equation} \label{outputratio}
y_{0}(E)=-\frac{1}{y_1(E)+y_2(E)+E}.
\end{equation}
Farhi et al. associated the ratio at a node with the literal value at that node. They showed that
for eigenstates with small enough $E$, the literal value 1 at a node corresponds
to a ratio at most linear in $E$ that is, $0\le y_i(E)<a_iE$ for some $a_i$. The literal value of 0
corresponds to a negative ratio, which has absoulte value at least $O(1/E)$,  written $y_i(E)<-1/(b_iE)$ for some $b_i$. Using the recursive
relation (\ref{outputratio}), and the correspondance between literal values
and ratios, we will show below that
the Y gadget really does correspond to a NAND gate.

We require $E$ small enough so that $Ea_{i}$ and $Eb_{i}$
are small. We call $a_{i}$ and $b_{i}$ complexities, and each input ratio
or output ratio has a complexity associated with it. The maximum complexity seen
anywhere in the tree determines the maximum allowed eigenvalue $E_0$, which in turn controls the runtime
of the algorithm. We will determine the complexity at the output ratio of a Y gadget
(the output complexity), given the complexities of the input ratios (the input complexities).

\begin{itemize}
\item {\bf Input $\{00\}$}. For simplicity, we assume the input complexities are equal,
with $y_1(E)=y_2(E)=\frac{-1}{bE}.$ Applying Eq. \ref{outputratio}, $y_0(E)=bE/(2-bE^2)$.
 To first order, $y_0(E)=\frac{bE}{2}$. Thus, the output ratio has
 literal value 1, with complexity $\frac{b}{2}.$ The output complexity is half of
 the input complexity.

\item {\bf Input $\{11\}$}. For simplicity, we assume the input complexities are equal,
with $y_1(E)=y_2(E)=aE.$ So $y_0(E)=-1/(E+2aE).$
To first order, $y(E)=\frac{-1}{2aE}$. Thus, the output ratio has
literal value 0, with complexity $2a.$ The output complexity is double
the input complexity.

\item {\bf Input $\{10\}$}. Then $y_1(E)=aE$ and $y_2(E)=\frac{-1}{bE}.$
So $y(E)=bE/(1-(1+a)bE^2).$ To first order, $y(E)=bE$. Thus, the output ratio has
literal value 1, with complexity $b$.
The output complexity equals the input complexity of the $0$-valued input.
\end{itemize}

We will show how the terms introduced in Section \ref{intro} (fault, trivial, strong, and weak), apply
to this example. Recall a node with input $\{00\}$ or $\{11\}$ is trivial, with both inputs strong.
A node with input $\{01\}$ or $\{10\}$ is a fault, with the 0-valued input strong and the 1-valued input weak.

Generally, weak inputs matter less when calculating $\kappa$ from Section \ref{intro}, and we see in this
example that the input complexity of the weak input does not affect the output complexity (as long
as it is not too large compared to the eigenvalue). Next consider composing NAND functions by
associating the output ratio of one gadget with the input ratio of another gadget. When we only have
trivial nodes, the complexity is doubled and then halved as we move between nodes with input $\{11\}$
and those with input $\{00\}$, resulting in no overall multiplicative increase in the complexity. So trivial
nodes do not cause a multiplicative increase in the complexity and therefore do not cause a multiplicative
increase in the runtime.

Whenever there is a fault, its strong input must have come from a node with input $\{11\}$. This node
doubles the complexity, and then the fault itself does not increase the complexity. Thus
the presence of the fault guarantees an overall doubling of the complexity coming from the strong input.

Based on this analysis, the maximum complexity of a $k$-faults NAND tree is
$2^k$, since $k$, as calculated in Section \ref{intro}, corresponds in our example
to the maximum number of times the complexity is doubled over the course
of the tree. In turn, the complexity
corresponds to the runtime, so we expect a runtime of $O(2^k)$.
(That is, the constant $w$ is 2 for the NAND tree.)
This analysis has ignored a
term that grows polynomially with the depth of the tree, coming from higher order corrections. This extra term
 gives us a runtime of $O(n^22^k)$ for a depth $n$, $k$-fault NAND tree.
%
%
%


\section{Classical Lower Bound} \label{apen}

\subsection{Lower Bound Theorem for Direct Boolean Functions}

Consider a
complete $c$-ary tree composed of direct boolean functions $f_D(x_1,\dots,x_c).$
Each node $d$ in the tree $t$ is given a boolean value $v(t,d)\in\{0,1\};$ for
leaves this value is assigned, and for internal nodes $d$, it is the
value of $f_D(v(t,d_1)\dots,v(t,d_c))$ where $(d_1,\dots,d_c)$ are the child nodes of $d$. If
clear from context, we will omit $t$ and write $v(d)$.

We will prove the following theorem:
\begin{theorem} \label{mainthm}
Let $f_D:\{0,1\}^c\rightarrow\{0,1\}$ be a direct boolean function that
is not a constant function. Suppose there exist $k_0$ and $n_0$ such that for each $r\in\{0,1\}$,
there is a distribution $\mathcal{G}_r$ of trees composed of $f_D$ with height
$n_0$, root node $r$, and satisfying the $k_0$-faults condition, such that a
priori, all leaves are equally likely to be 0 or 1. Then given $n \gg n_0$ and $k$
polynomial in $\log n$, for a tree $t$ composed of $f_D$ with height $k\cdot n$
and satisfying the $(k\cdot k_0)$-faults condition, no probabilistic classical algorithm
can in general obtain $v(t,g)$ with at least confidence $2/3$ before evaluating
$\beta^k$ leaves, where $\beta=\lfloor \mathrm{log}\tilde n/10 \rfloor$, and
$\tilde{n}=n-n_0$. (Note $a\cdot b$ denotes standard multiplication.)
\end{theorem}

\noindent An example of $\mathcal{G}_r$ for the NAND tree can be seen in Figure \ref{G_r}.
\begin{figure}[h]
\center\includegraphics[scale=.4]{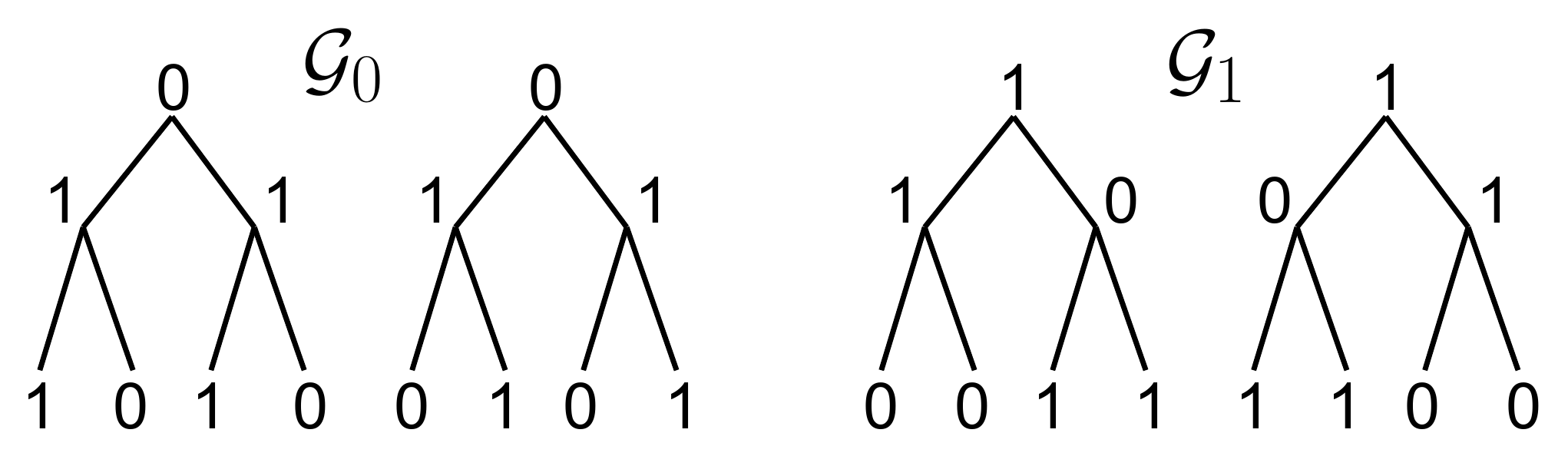}
 \caption{Examples of the distributions $\mathcal{G}r$ for the $\textsc{NAND}$ tree. Each $\mathcal{G}_r$ is composed
by choosing one of the two trees in its distribution with uniform probability. In this example, $k_0=1$, $n_0=2$.}
 \label{G_r}
  \end{figure}

For example, if $k=\lfloor \log n\rfloor$, then the query complexity is
$\Omega((\mathrm{log}n)^{\mathrm{log}n})=\Omega(n^{\mathrm{loglog}n})$, which gives us a
super-polynomial separation from the quantum algorithm. To prove Theorem \ref{mainthm}
we will construct
a `hard' distribution such that for a tree randomly drawn from this distribution, the best
deterministic algorithm (tailored to this hard distribution) cannot succeed with
high confidence. Then by Yao's minimax principle \cite{Yao1977}, the best
probabilistic algorithm can on average do no better.

\subsection{Construction of a Hard Distribution}
Choose $f_D$, $n_0$, $k_0$ according to Thm. \ref{mainthm}. Fix $n$ such that
$n\gg n_0$. For each $k\geq1$ we construct a distribution $\mathcal{T}_k$ of
trees with height $k\cdot n$ satisfying the $(k\cdot k_0)$-fault rule. For
a tree chosen at random from $T_k$ we show no deterministic algorithm can obtain
the value at the root with probability $2/3$ before evaluating $\beta^k$ of the
leaves, implying Thm \ref{mainthm}.

\begin{claim}
Given a function $f_D$ and a span program adjusted so that $\chi=\{0,1\}$ make
the function trivial, let $(x_{w1},\dots,x_{wc})$, for $w=\{0,1\}$, be the
literal inputs corresponding to $\chi=\{0,1\}$ respectively. Then given
$r\in\{0,1\}$, $n\geq1,$ there exists a unique tree $t_{r,n}$ with height $n$,
root node $g$, and $v(t_{r,n},g)=r$, such that for any non-leaf node $d$ in
the tree, the values of the child nodes of $d$ are $(x_{11},\dots,x_{1c})$ or $(x_{01},\dots,x_{0c})$. Moreover,
$t_{0,n}$ and $t_{1,n}$ differ at every node of the tree.
\end{claim}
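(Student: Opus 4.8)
The claim asserts the existence and uniqueness of a tree $t_{r,n}$ of height $n$ in which every internal node has children whose literal inputs form one of the two "trivial patterns" $(x_{11},\dots,x_{1c})$ or $(x_{01},\dots,x_{0c})$, with prescribed root value $r$, and moreover that $t_{0,n}$ and $t_{1,n}$ disagree everywhere.

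Let me think about what this is really saying. We have a direct boolean function $f_D$ with a span program normalized (via Theorem \ref{triv}) so that the two special inputs $\vec{x}^0$ and $\vec{x}^1$ (all $\chi_j$ false, all $\chi_j$ true) are trivial, with $f_D(\vec{x}^0) = 0$, $f_D(\vec{x}^1) = 1$. The claim's $(x_{01},\dots,x_{0c})$ is $\vec{x}^0$ and $(x_{11},\dots,x_{1c})$ is $\vec{x}^1$. Now, the key structural fact: since all $\chi_j$ are determined by the function (not the input), and $\vec{x}^0$ makes all $\chi_j = 0$ while $\vec{x}^1$ makes all $\chi_j = 1$, these two literal-input-vectors differ in *every* coordinate (as noted right after Definition \ref{dirdef}). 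So at each internal node, if we want its children's *values* to realize one of these two patterns, the children's values are forced: to realize $\vec{x}^1$ at node $d$ we need child $j$ to have value $b_j$ where $\chi_j(b_j) = 1$; to realize $\vec{x}^0$ we need child $j$ to have the complementary value.

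Here's the recursive construction. Think of building the tree top-down. The root $g$ must have value $r$. Given a node $d$ with a required value $\rho \in \{0,1\}$: exactly one of the two trivial patterns produces output $\rho$ under $f_D$ (namely $\vec{x}^1$ if $\rho = 1$, $\vec{x}^0$ if $\rho = 0$), so the required values of $d$'s children are uniquely determined. Recurse on each child with its required value, stopping at height $0$ (leaves), where the required value is just assigned. This produces a well-defined tree $t_{r,n}$, and at each step the choice was forced, which gives uniqueness. Formally I would do this by induction on $n$: for $n = 0$ the tree is a single leaf with value $r$, unique; for $n > 0$, the root's children's values are forced as above, and each child is the root of a unique height-$(n-1)$ subtree with that forced value by the inductive hypothesis, so the whole tree is unique.

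For the "differ at every node" part: I'll prove by induction on height that $v(t_{0,n}, d) \neq v(t_{1,n}, d)$ for every node $d$, where we identify corresponding nodes of the two trees via the obvious tree isomorphism. The root case is $v(t_{0,n},g) = 0 \neq 1 = v(t_{1,n},g)$. For the inductive step, suppose node $d$ has $v(t_{0,n},d) = \rho$ and $v(t_{1,n},d) = \bar\rho$. In $t_{0,n}$, the children of $d$ realize the trivial pattern with output $\rho$; in $t_{1,n}$, they realize the trivial pattern with output $\bar\rho$. Since the two trivial patterns $\vec{x}^0, \vec{x}^1$ differ in every coordinate, the value of child $j$ in $t_{0,n}$ differs from its value in $t_{1,n}$, for every $j$. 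This is exactly the inductive claim for the children, so we're done.

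The only place requiring care — and I'd call it the main (minor) obstacle — is being precise that each of the two normalized trivial patterns has a *distinct* output, i.e., that $f_D(\vec{x}^0) \neq f_D(\vec{x}^1)$, which is why "exactly one" of the two patterns yields a given required value. This is immediate from the normalization in Theorem \ref{triv} (which gives both a trivial node with output $0$ and one with output $1$) together with the observation after Definition \ref{dirdef} that $f(\vec{x}^0) = 0$ and $f(\vec{x}^1) = 1$; and the "differ at every bit" property of $\vec{x}^0$ versus $\vec{x}^1$ is likewise recorded there. Given those two ingredients, both existence/uniqueness and the "differ everywhere" conclusion are straightforward top-down inductions on the height $n$.
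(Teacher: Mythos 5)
Your proof is correct and follows essentially the same approach as the paper: observe that $\vec{x}^0$ and $\vec{x}^1$ differ in every coordinate (so each child's value is forced by the parent's value), then recurse top-down to get existence and uniqueness, and use the same coordinatewise-disagreement fact to conclude $t_{0,n}$ and $t_{1,n}$ differ everywhere. The paper states this more tersely, leaving the induction implicit; your version spells it out but is the same argument.
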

\begin{proof}
Recall that $\chi_j=0$ for all $j$ when $\chi=0$, and $\chi_j=1$ for all $j$ when
$\chi=1$. Hence if $x_{0j}=x'_j$, then
$x_{1j}=\bar{x}'_j$ because $\chi_j$ is represented by $x_j$
or $\bar{x}_j$ depending on the span program, and independent of the input. (That is,
if $\chi_j$ always is the negation of $x_j$, then if $x_j=1$, $\chi_j=0$, and if $x_j=0$,
$\chi_j=1$.)
 So for a node $d$ in $t_{r,n}$,
if $d=1$ it must have inputs $(x_{11},\dots,x_{1c})=(x'_1,\dots,x'_c)$ and if $d=0$, it must
have inputs $(x_{01},\dots,x_{0c})=(\bar{x}'_1,\dots,\bar{x}'_c)$. As soon as the value of
the root is chosen, there is no further choice available in $t_{r,n}$, giving
a unique tree. Because $x_{0j}=\bar{x}_{1j}$, $t_{0,n}$ and $t_{1,n}$ differ at
every node.
\end{proof}

We construct $\mathcal{T}_k$ by induction on $k$. First we construct the distribution
$\mathcal{T}_1$. Randomly choose a root value $r\in\{0,1\}$ and a category value
$i\in\{1,\dots,\tilde{n}\}$. Then up to depth $i$ insert the
tree $t_{r,i}$. At each node $d$ at depth $i$, choose a tree according to the
distribution $\mathcal{G}_{v(t,d)}$ to be rooted at $d$. The trees in
$\mathcal{G}_r$ have height $n_0$, so we are now at depth $(i+n_0)$.
 At each node $d$ at depth $(i+n_0)$ attach the tree
$t_{v(t,d),\tilde{n}-i}$.
Since all of the nodes in $t_{r,n}$
are trivial, {\it{any}} path from the root to the leaves contains at most
$k_0$ faults, so $t$ satisfies the $k_0$-faults rule. We call $i$ the level
of the fault, since faults can only occur from depth $i$ to depth $i+n_0$.
Let $\mathcal{T}_{1,r,i}$
be a distribution of trees constructed as above, with root $r$ and category $i$.
Then $\mathcal{T}_1$ is a combination of the distributions $\mathcal{T}_{1,r,i}$ such
that each distribution $\mathcal{T}_{1,r,i}$ with $r\in\{0,1\}$,
$i\in\{1,\dots,\tilde{n}\}$ is equally likely.

For the induction step, we assume we have a distribution of trees
$\mathcal{T}_{k-1}$. Then choosing a tree from the distribution
$\mathcal{T}_k$ is equivalent to choosing a tree $t$ from $\mathcal{T}_1$
and then at each leaf $a$ of $t$, attaching a tree from $\mathcal{T}_{k-1}$
with root value $v(t,a)$. We call $\mathcal{T}_{k,r,i}$ the distribution of
trees in $\mathcal{T}_k$ with root $r$ and category $i$ in the
{\it{first}} $n$ levels of the tree. As in the $\mathcal{T}_1$ case,
$\mathcal{T}_k$ is a combination of the distributions $\mathcal{T}_{k,r,i}$
such that each distribution $\mathcal{T}_{k,r,i}$ with $r\in\{0,1\}$,
$i\in\{1,\dots,\tilde{n}\}$ is equally likely. We'll call $\mathcal{T}_{k,r}$
the distribution of trees in $\mathcal{T}_k$ with root $r$.

Before we state the main lemma needed to prove Thm \ref{mainthm}, we need a
few definitions.

\begin{definition}
Suppose we have evaluated some leaves on a tree drawn from
the distribution $\mathcal{T}_k$. Let $d$ be a node in a tree, such
that $d$ is also the root of a tree $t\in\mathcal{T}_j$. Then $p_{root}(d,r)$
is the probability that a randomly chosen tree in $\mathcal{T}_{j,r}$ is not
excluded as a subtree rooted at $d$, where the probability is determined solely
by the evaluated leaves in $t$, that is, only from leaves descending from $d$.
\end{definition}

For example, if $g$ is the root of the entire tree, then $p_{root}(g,r)$ is
the probability that a random tree from the distribution $\mathcal{T}_{k,r}$
is not excluded, based on all known leaves. If on the other hand, we look at
a node $a$ in $\mathcal{T}_k$ at depth $n$, then $p_{root}(a,r)$ is the probability that a
tree from $\mathcal{T}_{k-1,r}$ is not excluded as the tree attached at $a$,
where the decision is based only on evaluated leaves descending from $a$.

Let $g$ be the root node. Then our confidence in the value at the root of our
tree depends on the difference between $p_{root}(g,0)$ and $p_{root}(g,1)$.

\begin{definition} \label{confdef}
We define the confidence level as
\begin{align}
\frac{|p_{root}(g,0)-p_{root}(g,1)|}{p_{root}(g,0)+p_{root}(g,1)}.
\end{align}
\end{definition}
So our confidence level is 0 if we have no knowledge of the root based on
values of leaves descending from $g$, and our
confidence level is 1 if we are certain of the root's value based on
these values.

Our goal will be to bound the confidence level if less than a certain
number of leaves have been evaluated. In calculating the confidence
level, we use $p_{root}$ rather than the true probability that the root $g=0$
because we will be using a proof by induction. Thus we must think of $g$ as
 being the root of a subtree that is an internal node in a larger tree.
Evaluations on other parts of this larger tree could give us information about
$g$. In fact, if evaluations on the subtree rooted at $g$ are interwoven with
evaluations on the rest of the tree, our outside information about $g$ will
change between each evaluation we make on the $g$-subtree. We could even
learn the value of $g$ with near certainty from this outside information. This
outside information could then be used to help us decide which leaves descending
from $g$ to evaluate. However, we will show that the outside information
will not help us much to increase the confidence level considering only leaves descending from $g$.
\begin{definition}
The relation $p\lesssim q$ ($p$ asymptotically smaller than $q$), means
$p<q(1+C\beta^\alpha\tilde{n}^\gamma)$, for fixed constants
$\alpha>0$, $\gamma<0$, and $C$ growing polynomial in $k$. Similarly
$p\gtrsim q$ means $p\geq q(1-C\beta^\alpha\tilde{n}^\gamma)$.
\end{definition}

Since $k$ is polynomial in $\log n$ and $\beta =\lfloor \mathrm{log}\tilde n/10 \rfloor$
by Theorem \ref{mainthm},  $C\beta^\alpha$ is
polynomial in $\log n$, and for $n$ large, $C\beta^\alpha\tilde{n}^\gamma$
contributes only a small factor.

Our main lemma is:

\begin{lemma}\label{mainLemma}
For each $1\leq l\leq k$, suppose the algorithm made less than $\beta^l$
evaluations on a tree chosen from some weighted combination of $\mathcal{T}_{k,0}$
and $\mathcal{T}_{k,1}$, where the weighting may change due to outside knowledge
gained between each evaluation.
Then with probability at least $1-p_{k}$, the confidence
level as defined in Def. \ref{confdef} is less than $c_{k,l}.$ The values of
$p_{k}$ and $c_{k,l}$ are:
\begin{align}
p_{k}&\lesssim 3\tilde{n}^{-1/5}\notag\\
c_{k,l}&\lesssim(8n_0)^{k-l+1}\tilde{n}^{-3(k-l+1)/5}\beta^{4+6(k-l)}.
\end{align}
 \label{mainlem}
\end{lemma}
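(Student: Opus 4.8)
The plan is to prove Lemma \ref{mainLemma} by induction on $k$, following the three-exception framework sketched in Section \ref{Classsec}. The base case $k=1$ handles a single split: a tree drawn from $\mathcal{T}_1$ is just a trivial tree $t_{r,i}$ down to level $i$, a layer of faults from the distributions $\mathcal{G}_{v(t,d)}$, and another trivial tree below. Since the only unknown is effectively the parity of $i$ (together with which child of each fault node carries which value), evaluating leaves amounts to a noisy binary search on $i$. I would quantify this directly: with $\beta$ queries one obtains at most $\beta$ ``probes'' at distinct heights, each probe returning a reliable bit only when the queried common ancestor lies below level $i$; the probability that these probes pin down $i$'s parity with advantage exceeding $c_{1,l}$ is bounded by the chance that a lucky early probe succeeds, which is $O(\tilde n^{-1/5})$, establishing $p_1 \lesssim 3\tilde n^{-1/5}$ and $c_{1,l}\lesssim (8n_0)^{2-l}\tilde n^{-3(2-l)/5}\beta^{4-6(l-1)}$ — which matches the claimed formula at $k=1$.

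For the inductive step, assuming the lemma for $k-1$, I would view a $\mathcal{T}_k$ tree as a $\mathcal{T}_1$ tree whose $c^n$ depth-$n$ leaves are each expanded into independent $\mathcal{T}_{k-1}$ subtrees, conditioned on matching root values. The key object is the set $L$ of depth-$n$ nodes; with a budget of $\beta^l$ queries, at most $\beta^j$ nodes of $L$ can each receive at least $\beta^{l-j}$ queries. An ``exception'' at the root is declared when (I) at least two subtrees in $L$ have internal exceptions, (II) the induced binary search on $i$ succeeds early, or (III) one subtree exception flips a value against a strong prior. As noted in the sketch, III $\Rightarrow$ II and I needs two sub-exceptions, so it suffices to bound $\Pr[\text{II}]$ and $\Pr[\text{I}]$. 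For II, the bound is the same binary-search estimate as in the base case, independent of tree size. For I, the naive union bound over subtrees fails because the adversary can spread a few queries across many subtrees and then concentrate; I would resolve this exactly as the text indicates, by simulating the algorithm $A$ with an algorithm $B$ that coalesces $A$'s queries into at most $\beta$ active subtrees and showing the exception probability of $B$ is not much smaller, so the inductive bound $p_{k-1}\lesssim 3\tilde n^{-1/5}$ applies to those subtrees and yields $\Pr[\text{I}]\lesssim \binom{\beta}{2}p_{k-1}^2$, which is absorbed into the $\lesssim$ slack. Summing gives $p_k\lesssim 3\tilde n^{-1/5}$.

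The second half bounds $c_{k,l}$ by conditioning on the (probable) non-exception event at height $nk$. Here I would tally all the advantage the algorithm could glean about the root of the depth-$nk$ tree from leaves below it: each node $v\in L$ receiving between $\beta^{j-1}$ and $\beta^j$ queries contributes at most $c_{k-1,j}$ (by the inductive hypothesis applied to the $\mathcal{T}_{k-1}$ subtree at $v$), plus the conditioning permits treating one under-queried node of $L$ as fully known. Propagating these partial-information quantities up through the single $\mathcal{T}_1$ layer multiplies by a factor controlled by $n_0$, $\beta$, and the binary-search confidence $c_{1,l}$; carefully adding $\sum_j \beta^j c_{k-1,j}$ against the binary-search term and simplifying (using that the dominant term comes from $j$ near $l$) yields $c_{k,l}\lesssim (8n_0)^{k-l+1}\tilde n^{-3(k-l+1)/5}\beta^{4+6(k-l)}$, completing the induction.

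The main obstacle is the I-type exception bound: the simulation argument turning the adaptive, budget-spreading algorithm $A$ into a well-behaved $B$ with at most $\beta$ active subtrees, while certifying that $B$'s exception probability is within a $(1+C\beta^\alpha\tilde n^\gamma)$ factor of $A$'s, is the delicate step — it is precisely where the bi-criteria ``rare events tracked carefully'' phenomenon bites, since we cannot afford a union bound over a super-polynomial number of queries. Everything else (the binary-search estimate, the arithmetic of summing $c_{k-1,j}$, verifying the $\lesssim$ constants stay polynomial in $k$) is routine bookkeeping by comparison, and the type-III simulation (showing each type-III-eligible step cuts the subtree survival probability by a constant $C$) follows the same template as the type-I one.
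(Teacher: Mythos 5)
Your plan tracks the paper's overall architecture—base case via a noisy binary search on the split height, inductive step via the three exception types and a coalescing simulation, then conditioning on no-exception to account the advantage—so at the level of outline it is on target. But there are two genuine gaps in the substance of the argument.

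First, you never introduce the quantitative bookkeeping the paper actually uses. The paper does not bound ``advantage'' directly: it rewrites the confidence level as $D(g)/S(g)$, where $S(g)=\sum_{i,r}p_{cat}(g,i,r)$ is the surviving probability mass over categories (split heights) and root values, and $D(g)=\sum_i|p_{cat}(g,i,0)-p_{cat}(g,i,1)|$ measures the discrepancy between the two root hypotheses. The base case is then a pair of separate estimates: an interval-halving lemma (Lemma \ref{intlem}) lower-bounding $S$, and a case analysis on the relative position of each query's split depth versus $i$ bounding the growth of $D$ by $n_0$ per query. Without this decomposition, the ``binary search on the parity of $i$'' sentence in your base case is not a proof, because one has to show simultaneously that the survived mass is not too small (noisy search is not lucky) \emph{and} that the residual signal about the root is $O(n_0\beta)$. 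Both halves reappear in the inductive step (Lemma \ref{lemmaS} for $S$, a Phase I/Phase II argument for $D$, propagated through $p_{tree}(t)=\prod_{a\in L}p_{root}(a,v(t,a))$), and the $c_{k,l}$ formula you quote is exactly the ratio that comes out of that machinery. Your ``carefully adding $\sum_j\beta^j c_{k-1,j}$ against the binary-search term'' is the right intuition, but there is no way to make ``adding advantages'' rigorous without the $D/S$ apparatus, because the prior on the root changes adaptively throughout and the contributions are multiplicative in $p_{tree}$, not additive in ``advantage.''

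Second, your Type I bound is wrong as stated, and the reason is the heart of the difficulty the paper's proof is built around. You claim that after coalescing into $\beta$ active subtrees, ``the inductive bound $p_{k-1}\lesssim 3\tilde n^{-1/5}$ applies to those subtrees and yields $\Pr[\text{I}]\lesssim\binom{\beta}{2}p_{k-1}^2$.'' But $p_{k-1}$ is a bound on the exception probability for an algorithm making fewer than $\beta^{k-1}$ queries on a $\mathcal{T}_{k-1}$ tree, whereas after coalescing into only $\beta$ trees the total budget $\beta^k$ can land almost entirely on a single coalesced tree, i.e. up to $\beta^k\gg\beta^{k-1}$ queries. The inductive hypothesis does not apply at that budget. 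The paper handles this with a separate, self-contained recursion (Lemma \ref{kplus1}) that allows an inflated budget $\beta^{k+1}$ on a $\tilde{\mathcal{T}}_k$ structure and, crucially, a concentration lemma (Lemma \ref{dividethepot}) showing that a random sub-coalescing into $\beta^2$ trees puts more than $\beta^k$ queries on any one of them with probability $O(1/\beta!)$. The resulting per-subtree exception probability is $\lesssim(\beta^3+3\beta^2)\tilde n^{-1/5}$, not $p_{k-1}$, and the $\binom{\beta}{2}$ is taken against \emph{that}. Your proposal also leans on the sketch's remark that ``every Type III is also a Type II'' to dismiss the Type III case, but in the full proof Type III exceptions require their own simulation ($\tilde{\mathcal{C}}$, which resets priors so the bias ratio becomes one) together with Lemma \ref{IIIlk} showing Type III can only arise at $l=k$; this is needed because in $\S$\ref{Sbound} the one permitted subtree exception is exactly what can shrink $S$ by the extra $4\beta^3$ factor, which must be controlled. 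These two pieces—the enlarged-budget recursion and the separate Type III accounting—are not ``routine bookkeeping'' that can be absorbed into the $\lesssim$ slack; they are where the bi-criteria structure you correctly identify as delicate actually lives.
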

The part of the lemma with $l=k$ and an equally weighted combination of
$\mathcal{T}_{k,0}$ and $\mathcal{T}_{k,1}$ implies Thm \ref{mainthm}.

\section{Proof of Lemma 1}
We will prove Lemma \ref{mainLemma} in this section.

In general, the confidence level as defined in Def. \ref{confdef} is too broad
to tackle directly. Instead, we will rewrite the confidence level in terms of
another probability $p_{cat}$ (``cat'' is for category).

\begin{definition}
Let $d$ be a node in a tree drawn from the distribution $\mathcal{T}_k$, such
that $d$ is also the root of a tree $t\in\mathcal{T}_j$. Then $p_{cat}(d,i,r)$
is the probability that a randomly chosen tree in $\mathcal{T}_{j,r,i}$ is not
excluded as a subtree at $d$, where the probability is determined solely by the
known values of leaves descending from $d$. If $d$ is clear from context, then
we will simply write $p_{cat}(i,r)$
\end{definition}
For each node $d$ as above, let $S(d)=\sum_{i,r}p_{cat}(d,i,r)$ and
$D(d)=\sum_i|p_{cat}(d,i,0)-p_{cat}(d,i,1)|$. Again the parameter $d$ may be
omitted if it is clear from context. Suppose $g$ is the root of the tree. Then
the confidence level can be written as
\begin{align}
&\frac{|p_{root}(g,0)-p_{root}(g,1)|}{p_{root}(g,0)+p_{root}(g,1)} \notag \\
&=\frac{|\sum_i p_{cat}(g,i,0)-\sum_i p_{cat}(g,i,1)|}{\sum_{i,r} p_{cat}(g,i,r)}&\notag \\
&\leq\frac{\sum_i|p_{cat}(i,0)-p_{cat}(i,1)|}{\sum_{i,r}p_{cat}(i,r)}& \notag \\
&=\frac{D(g)}{S(g)}&
\end{align}
Our general approach in proving Lemma \ref{mainlem} will be to find an upper bound
on $D(g)$ and a lower bound on $S(g)$, thus bounding $D(g)/S(g)$ and hence the
confidence level. We will generally refer to $D(g)$ as $D$ and $S(g)$ as $S$.

\subsection{Basis Case: $k=1$}
The lower bound on $S$ comes from creating an analog with a binary search.
The following lemma will be used in the base case, and in the induction step:
\begin{lemma} \label{intlem}
We have an interval of length $A_0$ to be divided into smaller intervals. At
each step $\tau\in\mathbb{Z}^+$, we choose a number $p_\tau$ between $0$ and
$1$. If our current interval is $A_\tau$, then $A_{\tau+1}=p_\tau A_\tau$
with probability $p_\tau$, and $A_{\tau+1}=(1-p_\tau)A_\tau$ with probability
$1-p_\tau$. Then for $m\in\mathbb{Z}^+$ and $0<F<1$, the probability that
$A_m<FA_0$ is less than $2^mF$, regardless of the choices of $p_\tau$.
\end{lemma}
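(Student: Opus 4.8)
The plan is to prove Lemma \ref{intlem} by coupling the process $(A_\tau)$ with the random walk of an interval around a uniformly chosen point, which turns the ``noisy binary search'' intuition into an identity. First I would make two reductions: rescale so that $A_0 = 1$, and assume that the rule producing each $p_\tau$ is a deterministic function of the realized history (the values $A_1,\dots,A_{\tau-1}$ and the branch directions taken so far); a randomized rule reduces to this by conditioning on its coins, and the bound I will prove holds uniformly over all such deterministic adaptive rules, which is precisely the ``regardless of the choices of $p_\tau$'' in the statement.

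Fix such a rule and let $U$ be uniform on $[0,1]$. I would build a nested sequence $I_0 \supseteq I_1 \supseteq \cdots$ of subintervals: put $I_0 = [0,1]$, and given $I_\tau$ (hence the value $p_{\tau+1}$ the rule assigns), split $I_\tau$ into a left part of length $p_{\tau+1}|I_\tau|$ and a right part of length $(1-p_{\tau+1})|I_\tau|$, letting $I_{\tau+1}$ be whichever part contains $U$. Writing $A_\tau' = |I_\tau|$, the process $(A_\tau')$ has exactly the law of $(A_\tau)$: conditioned on $I_\tau$ --- equivalently on $A_1',\dots,A_\tau'$ together with the branch directions, which is all the rule for $p_{\tau+1}$ may use --- the point $U$ is uniform on $I_\tau$, so $I_{\tau+1}$ is the left part with probability exactly $p_{\tau+1}$ (giving $A_{\tau+1}' = p_{\tau+1}A_\tau'$) and the right part otherwise (giving $A_{\tau+1}' = (1-p_{\tau+1})A_\tau'$). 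Hence it suffices to bound $\Pr[A_m' < F]$.

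The payoff of the coupling is that, with the rule fixed, the possible values of $I_m$ as $U$ ranges over $[0,1]$ form a partition of $[0,1]$ into at most $2^m$ subintervals $J_1,\dots,J_N$ (each level-$\tau$ interval is split into its two children, which partition it, so induction on $m$ gives $N \le 2^m$), with lengths $\ell_1,\dots,\ell_N \ge 0$ summing to $1$. Since $A_m' = \ell_j$ exactly on the event $\{U \in J_j\}$,
\[
\Pr[A_m' < F] \;=\; \Pr\!\Big[\, U \in \bigcup_{j:\ \ell_j < F} J_j \,\Big] \;=\; \sum_{j:\ \ell_j < F}\ell_j \;<\; 2^m F,
\]
the strict inequality holding because the sum has at most $2^m$ terms, each strictly below $F$, and the sum is $0 < 2^m F$ if there are no such terms. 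Undoing the rescaling gives $\Pr[A_m < FA_0] < 2^m F$ for every choice of the $p_\tau$'s.

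The main thing to get right will be what ``choosing $p_\tau$'' is permitted to depend on: the coupling works exactly because $p_\tau$ may react to the realized interval lengths and branch directions but not to finer information about the location of $U$, i.e.\ the adaptive, non-clairvoyant setting in which the lemma is used. A direct induction on $m$ through the recursion $\Pr[A_{m+1} < FA_0] = p_1 \Pr[A_m < (F/p_1)A_1 \mid \text{left}] + (1-p_1)\Pr[A_m < (F/(1-p_1))A_1 \mid \text{right}]$ also works, but keeping the inequality strict is fussier there (one has to split on whether $F/p_1$ and $F/(1-p_1)$ exceed $1$), so I would present the coupling argument.
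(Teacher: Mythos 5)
Your coupling argument is the same approach the paper takes: both model $(A_\tau)$ as the interval of a (possibly adversarial, adaptive) binary search for a uniformly random point, observe that the reachable final intervals partition $[0,A_0]$ into at most $2^m$ pieces, and bound the probability of landing in a piece of length below $FA_0$ by $2^m F$. You spell out the coupling and the strictness of the inequality more carefully than the paper does, but the underlying idea is identical.
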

\begin{proof}
We model the process as a binary search with uniform prior. We try to guess a
real number $x$, $0<x<A_0$ where at each step we guess the value $y_\tau$ and
are told whether $x<y_\tau$ or $x>y_\tau$. Then $A_\tau$ is the size of the
interval not excluded after $\tau$ steps, and $p_\tau$ is related to the
position of $y_\tau$ within the remaining interval. For each strategy of
choosing a sequence of $p_\tau$'s, $\tau\in\{0,\dots,m-1\}$, where each $p_\tau$
can depend on all previous outcomes and choices of $p_{\tilde{\tau}}$, for
$\tilde{\tau}<\tau$, there are at most $2^m$ possible outcome subintervals $A_m$.
These subintervals fill the interval $A_0$ completely and without overlap.
Then the condition $A_m<FA_0$ corresponds to $x$ falling in an interval with
size less than $FA_0$. Since there are only $2^m$ such intervals, the combined
length of intervals with size less that $FA_0$ is less than $2^mFA_0$. Since
we started with an interval of length $A_0$, the probability of being in one
of these intervals is less than $2^mF$.
\end{proof}

Consider a tree $t$ chosen according to the distribution $\mathcal{T}_1$.
We'll call the root node $g$ (so throughout this section, whenever we write
$p_{cat}(i,r)$ we mean $p_{cat}(g,i,r)$). Suppose we have so far evaluated
$\tau-1$ leaves, and thus excluded a portion of the possible trees in
$\mathcal{T}_1$. We are about to evaluate the $\tau^{th}$ leaf, which we'll
call $b$. Let $S_\tau$ be the current value of $S$, and let $S_{\tau+1}$ be
the value of $S$ after $b$ is measured.

Let $p_\tau$ be the true probability that the leaf $b$ is $0$, taking into
account outside knowledge of the value of $g$. Let $q_\tau$ be the probability
that $b=0$ considering only leaves descending from $g$. Then with probability
$p_\tau$, $b$ will be measured to be $0$, in which case $S_{\tau+1}=q_\tau S\tau$.
With probability $1-p_\tau$ $b$ will be measured to be $1$, in which case
$S_{\tau+1}=(1-q_\tau) S\tau$. Notice that $S_\tau$ behaves similarly to
$A_\tau$, with $q_\tau$ in the recursive relation where $p_\tau$ is expected.
To obtain a lower bound on $S$, we will show that at each step
$\tau$, $S_\tau$ cannot decrease much more than $A_\tau$. In particular
we will show $p_\tau/q_\tau\leq 2$ for all $\tau$, which means at each step
$S_\tau$ decreases by at most twice as much as $A_\tau$. So after $T$ steps,
if $A_T<F$ with some probability, then $S<2^{-T}F$ with the same probability.

At step $\tau$ we have some outside knowledge of $v(t,g)$. Let $p_g(r)$
be the probability that $g=r$ considering all available information (thus it
reflects the true probability of obtaining $g=r$). Note $p_g(r)$ can change at
each $\tau$ as the outside information changes.) Let $p_{b|r}$ be the conditional
probability that a tree not excluded from $\mathcal{T}_{1,r}$ has $v(b)=0$.
$p_{b|r}$ is calculated based only on leaves evaluated in $t$ (i.e. using no outside
information). Our confidence level is at most $c_{1,1}$ (or otherwise there
is no need to evaluate further), so $|p_{root}(g,r)-1/2|\leq c_{1,1}/2$, which
is asymptotically small. Up to this small term, we can replace $p_{root}(g,r)$
with $1/2$. Then
\begin{align} \label{manyprobs}
\frac{p_\tau}{q_\tau}&=\frac{p_{g}(0)p_{b|0}+p_{g}(1)p_{b|1}}{p_{root}(g,0)p_{b|0}+p_{root}(g,1)p_{b|1}}\notag \\
&=\frac{p_{g}(0)p_{b|0}+p_{g}(1)p_{b|1}}{(p_{b|0}+p_{b|1})/2}\notag \\
&\leq 2.
\end{align}
So at each step $\tau$, we divide $S$ by at most an extra factor of 2 in relation
to $A_\tau$. We emphasize that here we assumed nothing about the value of $p_g(0)$.
Indeed it will become crucial later that for the purpose of bounding the decrease in
$S$, the values $p_g(0)$ can be anything.

Now we apply Lemma \ref{intlem}: at the beginning of the computation, we have
$p_{cat}(i,0)=p_{cat}(i,1)=1$ for all $1\leq i\leq\tilde{n}$, and so $S=2\tilde{n}$.
Thus we consider $A_0=2\tilde{n}$, $m=\beta=\lfloor (\log\tilde{n})/10\rfloor$ and
$F=\tilde{n}^{-3/10}$. Then $A_m<2\tilde{n}^{7/10}$ with probability less
than $2^mF=\tilde{n}^{-1/5}$. From our extra factor of $2$ above, over
the $\beta$ steps, we see that $S$ is smaller than $A_m$ by at most a factor
of $\tilde{n}^{1/10}$, so $S\lesssim 2\tilde{n}^{3/5}$ with probability less
than $\tilde{n}^{-1/5}$. In other words, $S\gtrsim2\tilde{n}^{3/5}$ with
probability greater than $1-\tilde{n}^{-1/5}$. We call this procedure the
{\it{division process}}. If $A_m\lesssim 2\tilde{n}^{7/10}$ (which will happen
with small probability), then we say the division process was {\it{lucky}}.

For the bound on $D$, we will look at how the $p_{cat}$ are updated after
each evaluation of a leaf. Immediately after the first leaf is evaluated,
all $p_{cat}$ become $1/2$ because for each category $i$ and root value $r$,
any leaf initially has an equal probability
of being $0$ or $1$. Now suppose a leaf $b$ is evaluated at some later point and
 $E$ is the set of leaves already evaluated.
We have $c^n$ leaves, and we can index each leaf using the labels $\{1,\dots,c\}^n$,
so for example, the labels of
two leaves sharing the same parent node differ only in the last coordinate.
Let $h$ be the maximum number of initial coordinates that agree
betwee $b$ and any element of $E$. Thus $h$ is the depth of the last node where the
path from the root to $b$ breaks off from the paths from the root to already evaluated nodes.
Let $\{e_j\}\in E$ be the leaves in $E$ that share $h$ initial coordinates with $b$. We
will now update each $p_{cat}(i,r)$, depending on the outcome of the
evaluation of $b$, and the position of $h$ relative to $i$.

There are three cases:
\begin{itemize}
\item $\bf{[i+n_0\leq h]}$ Here the faults occur at a greater height than the split at $h$.
Then $b$ and $\{e_j\}$ are both contained in a subtree of the form $t_{r,j}$,
so in such a tree, knowledge of a single node determines all other nodes.
Thus $b$ can be predicted from the values of $\{e_j\}$. If $b$ agrees with
the predicted values, then $p_{cat}(i,r)$ remains the same. Otherwise
$p_{cat}(i,r)=0$, since we learn that the faults must in fact come after
the split at $h$.
\item $\bf{[i>h]}$ Now $b$ is the first leaf we evaluate that descends from
some node $d$ at level $i$. $b$ has equal probability of being $0$ or $1$
(even if we know the value of $d$), so $p_{cat}(i,r)$ is reduced by half.
\item $\bf{[i\leq h<i+n_0]}$ Without knowing the details of $f_D$, we cannot
know what this part of the tree looks like. However, the increase in
$|p_{cat}(i,0)-p_{cat}(i,1)|$ is at most $1$ for each $i$. Since there are
$n_0$ $i$'s that satisfy the condition $i\leq h<i+n_0$, the total increase
in $D$ is at most $n_0$.
\end{itemize}
We see the only case that involves an increase in $D$ is the third,
with an increase of at most $n_0$ at
each evaluation. Since $D$ is $0$ at the beginning, and we have at most
$\beta$ evaluations, $D\leq\beta n_0$. Combining this with our bound on $S$,
we have
\begin{align}
c_{1,1}&\leq \frac{D}{S} \lesssim \frac{n_0\beta}{2\tilde{n}^{3/5}} \notag\\
p_{1,1}&\lesssim \tilde{n}^{-1/5}.
\end{align}
This is more stringent than what is required for Lemma \ref{mainlem}, so we have
completed the basis case.

\subsection{Induction Step for Proof of Lemma \ref{mainlem}}

\subsubsection{Exceptions} In the $k=1$ case, the bound on $p_k$ comes from
the probability of getting lucky during the division process. In the $k>1$ case,
there are several different ways to get ``lucky." We will call such occurrences
{\it{exceptions.}} We will first show the the probability
of getting an exception is low.
Then, in the final two sections, we will show that
as long as an exception does not occur, the confidence level of the classical algorithm
will be bounded according to Lemma \ref{mainlem}.

Consider a tree in $\mathcal{T}_k$, with root $g$ and nodes $a\in L$ at depth
$n$. Then the nodes in $L$ are the leaves of a tree in $\mathcal{T}_1$, and the
roots of trees in $\mathcal{T}_{k-1}$ by our construction of $\mathcal{T}_k$. We
call the tree in $\mathcal{T}_k$ the main tree and the trees rooted in $L$ subtrees.
We put no restrictions on the order that the classical algorithm evaluates leaves,
but to prove the lower bound, we will think of the algorithm as trying to find
the values of nodes in $L$, and from there, performing the division process on
the $\mathcal{T}_1$ tree as described in the $k=1$ case.

With this set up, we can describe the three types of exceptions.
Type I occurs when exceptions occur at two or more subtrees.
Type II occurs when the division process in the main tree is lucky, i.e. when
$A_\tau\ge 2\tilde{n}^{7/10}$ for some $\tau\le m$. This can only happen after
the division process has started (that is, after obtaining the second sure value
of a node at depth $n$. Assuming there are no type I exceptions, this must take
at least $\beta^{k-1}$ evaluations). After the start of the division process,
this occurs with probability at most $\tilde{n}^{-1/5}$.
Type III occurs when an exception occurs in one subtree, but with special
conditions, which will be explained below. It will be shown in Section
\ref{Sbound} that $S$ cannot be too small if none of the above has happened.

\subsubsection{Bound on $p_{k}$}

In this section we will bound the probability of learning the value at the root
of the tree when an exception occurs.

First we will define Type III exceptions.

\begin{definition} \label{IIIdef}
Let $b$ be a leaf descending from $a\in L$, with $g$ the root of the tree.
If $v(b)={r'}$ we know we will have an exception at the subtree rooted at $a$,
while if $v(b)=\bar{r}'$ then there is no exception. Let $p_{partial(a)}$ be
the probability that $v(b)=r'$ considering only information from other evaluated
leaves descending from $a$. Let $p_{partial(g)}$ be the probability that
$v(b)=r'$ considering only information from other evaluated leaves descending
from $g$ (i.e. no outside knowledge). Then a Type III exception occurs if $b$
is measured to be $r'$, and if $p_{partial(g)}/p_{partial(a)}<2\beta^{-3}.$
We call $p_{partial(g)}/p_{partial(a)}$ the bias ratio.
\end{definition}
\noindent We will see in Section \ref{Sbound} why Type III exceptions harm the confidence level.

Again, the three types of exceptions are:
\begin{itemize}
\item{Type I} are when exceptions occur at two subtrees rooted in $L$.
\item{Type II} are when the division process is lucky.
\item{Type III} is described in Def. \ref{IIIdef}, and is when an exception
occurs with a bias ratio
less than $2\beta^{-3}.$
\end{itemize}


We will only consider the case $l=k$, since if $l<k$ nodes are evaluated,
the probability of having an exception cannot be more than if $l=k$.
In proving this, we will use a generalization of $T_k$, that is more flexible
to use in an induction.

\begin{definition} \label{tildetree}
An element of the set $\tilde{\mathcal{T}}_k$ is a collection of $c^n$ $c$-ary
trees from $\mathcal{T}_{k-1}$. Each tree in the collection is called a subtree.
(We call this a subtree because it corresponds roughly to a subtree in $\mathcal{T}_k$,
but notice that subtrees in $\tilde{\mathcal{T}}_k$ are not connected to each other
as a part of a larger tree structure as in $\mathcal{T}_k$. In both $\mathcal{T}_k$
and $\tilde{\mathcal{T}}_k$, a subsubtree is a tree rooted at a depth $n$ below
the root of a subtree.) All roots of subtrees in $\tilde{\mathcal{T}}_k$ have
{\it{a priori}} probabilities that are arbitrary, time dependent (the priors can
change after each evaluation of a leaf on any subtree), and unrelated to each other.
For convenience, we call the whole collection of subtrees in $\tilde{\mathcal{T}}_k$
a tree. Then a Type I exception in such a tree is defined to be two exceptions at two
different subtrees, where exceptions at subtrees are normal exceptions in
$\mathcal{T}_{k-1}$. Let $\tilde{p}_{partial(a)}(r)$ be the probability that a leaf
$b$ descending from a root $a$ of a subtree in $\tilde{\mathcal{T}}_k$ has value $r$ based only on other evaluated
leaves descending from $a$. Let $\tilde{p}_{partial(g)}(r)$ be the probability
that a leaf $b$ descending from a root node $a$ has value $r$ including the
{\it{a priori}} probabilities associated with $a$. Then a Type III exception is
defined to be an exception at a subtree with
$\tilde{p}_{partial(g)}(r)/\tilde{p}_{partial(a)}(r)<2\beta^{-3}$.
We call $\tilde{p}_{partial(g)}(r)/\tilde{p}_{partial(a)}(r)$ the bias ratio for
trees in $\tilde{\mathcal{T}}_k$. Exceptions in subtrees are exceptions for
$\mathcal{T}_{k-1}$ as defined above. As in a normal $\mathcal{T}_{k-1}$ tree,
if an algorithm makes $\beta^{k-1}$ or more queries at a subtree of $\tilde{\mathcal{T}}_k$, the value
of the root of the subtree is given, and exceptions can no longer occur at that subtree.
\end{definition}

Note that in the structure $\tilde{\mathcal{T}}_k$, binary values are associated to each node of the subtrees $\mathcal{T}_{k-1}$,
but nowhere else. In particular there is no such thing as the value of the root of
$\tilde{\mathcal{T}}_k$. Note also that there are no type II exceptions for
$\tilde{\mathcal{T}}_k$. The numberings for exceptions are used in analogy with exceptions for
$\mathcal{T}_k$.

We may consider $\tilde{\mathcal{T}}_k$ as a black box that accepts as input the
location of the next leaf to query, and outputs the value of that leaf according
to the priors, as well as a new set of priors for all subtrees.
At some time, suppose that the priors that a certain subtree's root
has value 0 is $p$ and has value 1 is $1-p$. Further, suppose that based on previous
outputs of the black box, the distribution at that subtree of 0-valued subtrees is $\tilde{T}'_{k-1,0}$
and the distribtion of 1-valued subtrees is $\tilde{T}'_{k-1,1}$.
Then the value returned by the black
box must be chosen according to the distribution $p\tilde{T}'_{k-1,0}+(1-p)\tilde{T}'_{k-1,1}$.
 To prove that an algorithm
on $\mathcal{T}_k$ has a low probability of exception, we will need to prove upper bounds on the
probability of exceptions for an algorithm running on $\tilde{\mathcal{T}}_k$.
For this purpose, we can assume that the algorithm on $\tilde{\mathcal{T}}_k$ is specifically trying to get
such an exception, and that the priors can change in a way that is beneficial
to the algorithm.



Our strategy in proving the bound on the probability of exception in a tree
from $\mathcal{T}_k$ is as follows: we will first show that we can map an
algorithm $\mathcal{A}$ acting on a tree from $\mathcal{T}_k$ to an algorithm
$\tilde{\mathcal{A}}$ acting on a tree from $\tilde{\mathcal{T}}_k$. The
mapping is such that the probability of getting a Type I or Type III exception
is the same in both algorithms. Next we will map $\tilde{\mathcal{A}}$ to an
algorithm $\tilde{\mathcal{B}}$, which acts on a smaller set of subtrees than
$\tilde{\mathcal{A}}$. We will show that the probability that $\tilde{\mathcal{B}}$
has exceptions is similar to the probability that $\tilde{\mathcal{A}}$ has
exceptions. Using an inductive proof, we will then bound the probability of
$\tilde{\mathcal{B}}$ having an exception, which in turn bounds the probability
of $\tilde{\mathcal{A}}$ having an exception, which then bounds the probability
of $\mathcal{A}$ getting a Type I or III exception. Type II exceptions are
treated slightly differently throughout the analysis, using the fact that
the probability of a Type II exception in a tree from $\mathcal{T}_k$ is
always bounded by $\tilde{n}^{-1/5}$ once the division process has started
in that tree.

Consider a mapping from a tree in $\mathcal{T}_k$ to a tree in
$\tilde{\mathcal{T}}_k$ where any time an algorithm $\mathcal{A}$ evaluates
a leaf on a subtree from a distribution $\mathcal{T}_k$, we copy the value of
the leaf to the corresponding leaf on the corresponding subtree from
$\tilde{\mathcal{T}}_k$. (Note $\tilde{\mathcal{T}}_k$ was created so that
a tree in $\mathcal{T}_k$ has the same number of subtrees as a tree in
$\tilde{\mathcal{T}}_k$.) Set the {\it{a priori}} probabilities at a root
$\tilde{a}$ of a subtree in $\tilde{\mathcal{T}}_k$ to equal that
of the corresponding node $a$ at depth $n$ in $\mathcal{A}$ (this guarantees
that the values returned by $\tilde{\mathcal{T}}_k$ are in the right
distribution). We call $\tilde{\mathcal{A}}$ the algorithm acting on
$\tilde{\mathcal{T}}_k$. Notice that anytime a Type I exception occurs in $\mathcal{A}$, it also
occurs in $\tilde{\mathcal{A}}$ since if there is an exception at a subtree
in one, there is an identical subtree on the other that also has an exception.
Also, anytime $\mathcal{A}$ has a Type III exception, algorithm $\tilde{\mathcal{A}}$
will also have a Type III exception, since the priors are the same. So we can
conclude that the probability that $\mathcal{A}$ has a Type I or Type III exception
is identical to the probability that $\tilde{\mathcal{A}}$ has Type I
or Type III exceptions.

Our strategy will be to bound the probability that a tree from $\tilde{\mathcal{T}}_k$
can have an exception of Type I or III, and then we will use the bound that
a Type II exception happens with probability $\tilde{n}^{-1/5}$ once the division
process starts (which it will for $\beta^k$ evaluations), for the same reason
we get lucky with probability $\tilde{n}^{-1/5}$ in the $k=1$ case.

Let $\tilde{\mathcal{A}}$ be as defined above. To get the bounds on the probability
of exception, we will simulate $\tilde{\mathcal{A}}$ using an algorithm $\tilde{\mathcal{B}}$
which acts on $\beta$ trees of type $\tilde{\mathcal{T}}_{k-1}$. Suppose
$\tilde{\mathcal{A}}$ evaluates leaves descending from a subtree rooted at the node
$a_i$, and furthermore, evaluates leaves descending from subsubtrees at depth $n$
below $a_i$, rooted at the nodes $(a_{i1},a_{i2},\dots)$. Then
$\tilde{\mathcal{B}}$ randomly picks one of its $\beta$ trees on which to
simulate $a_i$, and each time $\tilde{\mathcal{A}}$ evaluates on a new subsubtree
of $a_i$ rooted at $a_{ij}$, $\tilde{\mathcal{B}}$ randomly chooses a new subtree
in the tree it has chosen to simulate $a_i$, and one that isn't yet being used
for simulation. Then all leaves evaluated on a subsubtree by $\tilde{\mathcal{A}}$
are mapped onto the leaves of the corresponding subtree in $\tilde{\mathcal{B}}$.
This setup will always be possible because the number of evaluated leaves is
small compared to the number of nodes at depth $n$. The result is that each subtree
in $\tilde{\mathcal{A}}$ is mapped (not necessarily injectively) to one of the $\beta$
$\tilde{\mathcal{T}}_{k-1}$ trees, while the sub-subtrees in $\tilde{\mathcal{A}}$ are mapped
injectively to subtrees of the appropriate $\tilde{\mathcal{T}}_{k-1}$.

Our strategy will be to show that $\tilde{\mathcal{B}}$ has Type I and Type III exceptions
at least as often as $\tilde{\mathcal{A}}$, and then to bound the probability that
$\tilde{\mathcal{B}}$ has an exception. Notice that each of the $\beta$ trees in $\tilde{\mathcal{B}}$ can have
at most $\beta^k$ evaluations on it. So we will prove the following lemma:

\begin{lemma} \label{kplus1}
Suppose an algorithm makes less than $\beta^{k+1}$ evaluations on a tree from
$\tilde{\mathcal{T}}_{k}$. Then the probability of getting
one exception on a subtree is $\lesssim (\beta^{3}+3\beta^2)\tilde{n}^{-1/5}$.
The probability of getting a Type I exception is
$\lesssim(\beta^{3}+3\beta^2)^2\tilde{n}^{-2/5}$. The probability of getting a
Type III exception is
$\lesssim (1+3\beta^{-1})2\tilde{n}^{-1/5}$.

\end{lemma}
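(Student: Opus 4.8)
\emph{Proof plan.} The plan is to prove the three bounds together by induction on $k$, the inductive hypothesis being exactly the statement of the lemma with $k$ replaced by $k-1$ and the query budget by $\beta^{k}$. The base case is the smallest $k$ for which $\tilde{\mathcal T}_k$ is non-degenerate: there the subtrees carry no Type~I or Type~III exceptions of their own, so every subtree exception is a Type~II event of an innermost $\mathcal T_1$-structure, and the basis-case analysis (Lemma~\ref{intlem}) already bounds the probability of such a ``lucky division'' by $\lesssim\tilde n^{-1/5}$, which sits inside all three stated bounds once one counts, as below, how many subtrees can have an active division process.

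For the inductive step, let $\tilde{\mathcal A}$ make fewer than $\beta^{k+1}$ queries on a tree from $\tilde{\mathcal T}_k$ and pass to the simulation $\tilde{\mathcal B}$ already constructed in the text: $\tilde{\mathcal B}$ runs on $\beta$ independent trees from $\tilde{\mathcal T}_{k-1}$, each receiving fewer than $\beta^{k}$ queries (so the inductive hypothesis applies to each), the $\mathcal T_{k-1}$-subtrees of $\tilde{\mathcal A}$ are distributed, not injectively, among these $\beta$ trees, and the subsubtrees of $\tilde{\mathcal A}$ are mapped \emph{injectively} onto the subtrees of the simulating tree, with all leaf values copied across. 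The compatibility claim to check is that if $\tilde{\mathcal A}$ incurs a Type~I or Type~III exception at a subtree $a_i$ for a reason other than $a_i$'s own internal division process, then the $\tilde{\mathcal T}_{k-1}$-tree simulating $a_i$ incurs a corresponding Type~I or Type~III exception; this holds because those exceptions of $a_i$ are assembled out of exceptions at $a_i$'s subsubtrees, which are faithfully reproduced, with the same priors and hence the same bias ratios, as exceptions at subtrees of the simulating tree. The internal-division (Type~II) exceptions of the subtrees $a_i$ are the only ones $\tilde{\mathcal B}$ cannot see; these are controlled directly, since a $\mathcal T_{k-1}$-subtree cannot begin its division process before about $2\beta^{k-2}$ of its leaves are queried, so at most $O(\beta^{3})$ subtrees have an active division and each is lucky with probability $\le\tilde n^{-1/5}$.

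Now assemble the bounds by union bounds. For the first, $\Pr[\text{some subtree of }\tilde{\mathcal T}_k\text{ has an exception}]$ is at most the $O(\beta^{3})\tilde n^{-1/5}$ from the division-counting plus $\beta$ times the inductive Type~I-plus-Type~III bound (one of the $\beta$ trees of $\tilde{\mathcal B}$ incurring an exception), which is dominated by $2\beta\tilde n^{-1/5}$; collecting gives $\lesssim(\beta^{3}+3\beta^{2})\tilde n^{-1/5}$. For the second, a Type~I exception of $\tilde{\mathcal T}_k$ is two subtree exceptions; split according to whether the two simulating trees in $\tilde{\mathcal B}$ coincide (then that single $\tilde{\mathcal T}_{k-1}$-tree has a Type~I exception, bounded inductively) or differ (then two of the $\beta$ trees each have an exception, bounded by $\binom{\beta}{2}$ times the square of the single-tree exception probability), the point being that distinct subtrees have unrelated priors and disjoint leaves, so conditioned on $\tilde{\mathcal A}$'s adaptive choices the two events are effectively independent, yielding $\lesssim(\beta^{3}+3\beta^{2})^{2}\tilde n^{-2/5}$. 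For the third, Definition~\ref{IIIdef} makes a Type~III exception happen only when the bias ratio is below $2\beta^{-3}$, i.e.\ when the \emph{true} (outside-informed) probability of the exception-triggering value is smaller, by a factor exceeding $\beta^{3}/2$, than the value's internally-computed probability; running the further simulation sketched in Section~\ref{Classsec} that replaces the outside priors by the internal probabilities shows the total chance of actually triggering a Type~III exception is at most a $2\beta^{-3}$-fraction of the chance of the underlying subtree exception, which gives $\lesssim(1+3\beta^{-1})\,2\tilde n^{-1/5}$.

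The main obstacle is making the two simulations rigorous against an adaptive adversary. The algorithm $\tilde{\mathcal A}$ may scatter a few queries over enormously many subsubtrees, read the answers, and only then decide where to concentrate, so that once one reverts to the honest view the inductive budget no longer controls the heavily-queried subtrees; the reassignment rule in $\tilde{\mathcal B}$ (a fresh, previously-unused subtree of a random simulating tree each time $\tilde{\mathcal A}$ first touches a new subsubtree) is precisely what launders this adaptivity into the clean $\beta$-tree picture, and one must verify that the relabeling never \emph{decreases} the exception probability (it never merges distinct subsubtrees and it preserves priors) and that the unseen Type~II contributions are genuinely negligible after the $O(\beta^{3})$ count. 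A related subtlety, stressed in Section~\ref{Classsec}, is that because the budget $\beta^{k+1}$ is itself super-polynomial one cannot simply union-bound away rare ``lucky'' events; the Type~III simulation must carry these along and show their cumulative weight is suppressed by the bias-ratio threshold.
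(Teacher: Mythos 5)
There is a genuine gap, and it sits exactly at the point you yourself flag as the main obstacle. You run the inductive step through the simulation $\tilde{\mathcal B}$ on $\beta$ trees of type $\tilde{\mathcal T}_{k-1}$ and assert that each simulating tree receives fewer than $\beta^{k}$ queries ``so the inductive hypothesis applies to each.'' With a total budget of $\beta^{k+1}$ queries spread over only $\beta$ simulating trees, the expected load per tree is already $\beta^{k}$, i.e.\ exactly at the inductive threshold, so nothing guarantees the hypothesis applies: with constant (not negligible) probability some simulating tree absorbs more than $\beta^{k}$ queries, and the induction collapses there. The paper's proof of this lemma does \emph{not} reuse the $\beta$-tree simulation $\tilde{\mathcal B}$ (that one is used later, for the $p_k$ bound, where the budget is only $\beta^{k}$); it introduces a different simulator $\tilde{\mathcal F}$ acting on $\beta^{2}$ trees, so that each subtree of the original algorithm is assigned with probability $\beta^{-2}$, the expected per-tree load drops to $\beta^{k-1}$, and Lemma \ref{dividethepot} (a Poisson-type tail bound on sums of pieces of size at most $\beta^{k-1}$ summing to at most $\beta^{k+1}$) shows the probability that any simulating tree exceeds the inductive budget $\beta^{k}$ is of order $1/\beta!$, hence negligible. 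Your proposal contains no concentration argument at all, and with $\beta$ trees no such argument is available; this is the crux of defeating the adaptive adversary, and the ``laundering'' you describe does not achieve it.

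The downstream bookkeeping also reflects this structural error: in the paper the $\beta^{2}$ factor in the first bound $(\beta^{3}+3\beta^{2})\tilde n^{-1/5}$ comes precisely from union-bounding the inductive single-exception probability over the $\beta^{2}$ trees of $\tilde{\mathcal F}$, and the Type I bound uses $\binom{\beta^{2}}{2}$ together with a $(1-\beta^{-2})$ collision correction, whereas your accounting over $\beta$ trees produces a mismatched $2\beta\tilde n^{-1/5}$ term that happens to sit under the stated bound only because the flawed simulation undercounts. Your treatment of the Type II contribution ($\beta^{3}$ division-process opportunities, each lucky with probability $\tilde n^{-1/5}$) and of the Type III bound via re-equalizing the priors and paying the $2\beta^{-3}$ bias-ratio factor does match the paper's argument (the algorithm $\tilde{\mathcal C}$), but both rest on the single-subtree-exception bound whose derivation is the broken step. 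To repair the proposal you would need to replace the $\beta$-tree simulator by one on $\beta^{2}$ trees and prove the load-concentration statement of Lemma \ref{dividethepot}.
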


Note that we are given $\beta$ times more total evaluations than in Theorem \ref{mainthm}. However, we will 
give the algorithm the added advantage that once $\beta^{k-1}$ queries
are made on a tree from $\mathcal{T}_{k-1}$, the algorithm will just return the
value of the root of the subtree. This added information will only be an advantage. Thus,
the number
of evaluations on each subtree is bounded by $\beta^{k-1}$. Essentially we are trying to show that the
probability of exception also increases by at most a factor of $\beta$.
To prove Lemma \ref{kplus1}, we require one further lemma:

\begin{lemma} \label{dividethepot}
Suppose we have a finite sequence of real numbers $\{b_i\}$, such that each
$b_i\leq \beta^{k-1}$ and $\sum_i b_i\leq\beta^{k+1}$. Construct a subset $W$
of $\{b_i\}$ by choosing to place each $b_i$ into $W$ with probability
$\beta^{-2}$. Then the probability that $\sum_{b_i\in W}b_i>\beta^k$ is of order
$1/\beta!$.
\end{lemma}
\begin{proof}
First normalize by letting $b_i'=b_i/\beta^{k-1}$. Then we can
describe the sum of elements in $W$ as a random variable $X$, which is itself the
sum of random variables $X_i$. The value of $X_i$ is chosen to be $0$ if
$b_i \not\in W$ and $b_i'$ if $b_i\in W$. Then the mean $\bar{X_i}$ is less than
$\beta^{-2}$ and $\sum\bar{X}_i\leq1$. Since on average, at most one of the
$X_i$'s will be $1$, $W$ is very close to a Poisson distribution with parameter
at most 1. Then the probability that $X>\beta$ is at most of order $1/\beta!$.
\end{proof}

Now we will prove Lemma \ref{kplus1}:
\begin{proof}
The proof will follow from induction on $k$. For $k=1$, there are no subtrees,
so the probability of exceptions on subtrees is 0.

For the inductive case, we assume we have an algorithm $\tilde{\mathcal{D}}$ acting on
a tree from $\tilde{\mathcal{T}}_{k}$, such that $\tilde{\mathcal{D}}$ can evaluate
leaves descending from up to $\beta^{k+1}$ subtrees. Notice that once
$\tilde{\mathcal{D}}$ evaluates more than $\beta^{k-1}$ leaves of any subtree, an exception can
no longer occur on that subtree and in fact the algorithm is given the value at
the root of the subtree, so we can stop simulating that subtree.
It takes $\beta^{k-2}$ evaluations on any subtree to begin the division process
in that subtree, so $\tilde{\mathcal{D}}$ with its $\beta^{k+1}$ total evaluations has
$\beta^3$ opportunities to get an exception of Type II in a subtree.
Thus the probability of getting an exception of Type II in one subtree is at most
$\beta^3\tilde{n}^{-1/5}$.

To determine the probability that $\tilde{\mathcal{D}}$ triggers Type I and
III exceptions in subtrees, we will simulate using a new algorithm $\tilde{\mathcal{F}}$.
$\tilde{\mathcal{F}}$ evaluates on $\beta^2$ trees of type $\tilde{\mathcal{T}}_{k-1}$. The simulation
will work similarly as above, except $\tilde{\mathcal{F}}$ can evaluate on $\beta^2$ trees
instead of $\beta$ trees.

Now suppose $\tilde{\mathcal{D}}$ produces an exception of Type I or Type III on a subtree
rooted at $a$. We show that this implies an exception in the corresponding tree in
$\tilde{\mathcal{F}}$. A type I exception on $a$ means there are two exceptions on subtrees
of $a$ (of type $\mathcal{T}_{k-2})$. Since there is an injective correspondence on subtrees
of type $\mathcal{T}_{k-2}$, there are exceptions on two subtrees of the $\tilde{\mathcal{T}}_{k-1}$
that $a$ is assigned to. The argument for type III exceptions is similar. If less than
$\beta^{k}$ evaluations are made on the corresponding tree in $\tilde{\mathcal{F}}$,
then we can use the probabilities from the inductive assumption to bound the likelihood
of exceptions occurring. Furthermore, using Lemma \ref{dividethepot} we will show
that the probability that more than $\beta^k$ evaluations are made on any one
of the $\beta^2$ trees used in $\tilde{\mathcal{F}}$ is at most of order $1/\beta!$.

Let $b_i$ in Lemma \ref{dividethepot} be the number of evaluations $\tilde{\mathcal{D}}$ has made
on a subtree rooted at $a_i$. Notice that all $b_i$ are less than
$\beta^{k-1}$, as required since $\tilde{\mathcal{D}}$ makes no more than $\beta^{k-1}$
evaluations on any one node in $L$. $W$ then corresponds to the amount of
evaluations on one of the $\beta^2$ trees that $\tilde{\mathcal{F}}$ evaluates. By
Lemma \ref{dividethepot} the probability that the number of evaluations on one
of $\tilde{\mathcal{F}}$'s trees is larger than $\beta^k$ is at most $1/\beta!$.
Since $\beta!$ is superpolynomial in $\tilde{n}$, we can ignore the probability
of this occurring, even after multiplying by $\beta^2$ for each of the possible
trees used by $\tilde{\mathcal{F}}$.

A Type I or Type III exception at a subtree $\tilde{\mathcal{D}}$  creates a
corresponding exception at a root of one of $\tilde{\mathcal{F}}$'s trees.
Thanks to Lemma \ref{dividethepot}, we can use the inductive
assumptions, so the probability of one exception of Type I or III occurring on
a tree in $\tilde{\mathcal{F}}$ is $\lesssim (1+3\beta^{-1})2\tilde{n}^{-1/5}+
(\beta^{3}+2\beta^2)^2\tilde{n}^{-2/5}\lesssim3\tilde{n}^{-1/5}$. Since there are
$\beta^2$ trees used by $\tilde{\mathcal{F}}$, the total probability of $\tilde{\mathcal{F}}$
having 1 exception of Type I or III is at most $3\tilde{n}^{-1/5}\beta^2$,
and therefore this is also a bound on the probability of $\tilde{\mathcal{D}}$
having an exception. Combining
this with the probability of $\beta^3\tilde{n}^{-1/5}$ of getting a Type II exception
in a subtree in $\tilde{\mathcal{D}}$, we have the first bound in Lemma \ref{kplus1}.

If $P$ is the the probability
that $\tilde{\mathcal{D}}$ has Type I or III exceptions on two of its subtrees (leading to a Type I exception),
then $\tilde{\mathcal{F}}$ will also have exceptions on two of its trees with probability at least $P(1-\beta^{-2})$
since it will have two exceptions unless it happens to simulate both of the subtrees with exceptions
on the same tree, which occurs with probability $\beta^{-2}$. Using the inductive assumption,
the probability of getting two Type I or III exceptions is the square of the probability of getting
a single exception, times the number of ways that two trees can be chosen
among $\beta^2$ trees:
\begin{equation}
P(1-\beta^{-2})\lesssim4\tilde{n}^{-2/5}\frac{\beta^2(\beta^2-1)}{2}\rightarrow P\lesssim4\tilde{n}^{-2/5}\beta^4.
\end{equation}
Combining this probability with the probability of a Type II error, we obtain the
 second inequality in Lemma \ref{kplus1}.

For the final inequality in Lemma \ref{kplus1}, regarding exceptions of Type III, we
will simulate the evaluations that $\tilde{\mathcal{D}}$ makes
using another algorithm $\tilde{\mathcal{C}}$ that is also evaluating
on a tree from $\tilde{\mathcal{T}}_k$. The value of any leaf
that $\tilde{\mathcal{D}}$ evaluates is mapped to the corresponding leaf in
$\tilde{\mathcal{C}}$. However, the {\it{a priori}} probabilities of root of subtrees
in $\tilde{\mathcal{C}}$ are adjusted so that any time
$\tilde{\mathcal{D}}$ is about to get a Type III exception, the bias ratio in
$\tilde{\mathcal{C}}$ goes to 1.  To get a bias ratio
of one, simply make the {\it{a priori}} probabilities equal.
We are free to change the {\it{a prior}}
probabilities in this way and still have the tree be in $\tilde{\mathcal{T}}_k$,
 and therefore, the inductive assumption, as well as the first two
 bounds in
 Lemma \ref{kplus1} must hold for $\tilde{\mathcal{C}}$ since it holds for
 all trees in $\tilde{\mathcal{T}}_k$.

Note that if an exception at a subtree rooted at $a$ is possible in $\tilde{\mathcal{D}}$, then it will also be possible in
$\tilde{\mathcal{C}}$. If an exception can happen in $\tilde{\mathcal{D}}$, then conditioning on $v(a)=0$
or $v(a)=1$, there is a non-zero probability that $b$, a leaf descending from $a$, will trigger an exception at $a$. So when we look
at an even likelihood of $v(a)=0$ and $v(a)=1$, there will always be some probability that $b$
will trigger an exception at $a$ in $\tilde{\mathcal{C}}$.


By Lemma \ref{kplus1},
$\tilde{\mathcal{C}}$'s probability
of getting one exception at $a$ is just $(\beta^3+3\beta^2)\tilde{n}^{-1/5}$.
Changing the
bias ratio changes the probability that there is an exception at $b$. Let $p_{partial(a)}$
 be the probability that there is an exception at $a$ due to $b$, only considering information
 from leaves descending from $a$. Then using Eq. \ref{manyprobs} and the definition of the bias
 ratio, in $\tilde{\mathcal{C}}$ the probability of getting
an exception at $a$ is less than $2p_{partial(a)}$.
So $2p_{partial(a)}\lesssim (\beta^3+3\beta^2)\tilde{n}^{-1/5}$.
Now in $\tilde{\mathcal{D}}$, by the same reasoning,
the probability of getting an exception at $a$ is less than $4p_{partial(a)}/\beta^3$.
But we can plug in for $2p_{partial(a)}$ to get the probability of exception is $\lesssim
(1+3\beta^{-1})2\tilde{n}^{-1/5}$.
\end{proof}

Now we can put this all together to get the bound for on $p_k$. As described in the pargraphs following
Def. \ref{tildetree},
we simulate
the algorithm $\tilde{\mathcal{A}}$ using an algorithm $\tilde{\mathcal{B}}$, where $\tilde{\mathcal{B}}$
makes evaluations on only $\beta$ trees from $\tilde{\mathcal{T}}_{k-1}$. First we will determine
the probability of getting each type of exception on one subtree in $\tilde{\mathcal{A}}$.
Since it takes $\beta^{k-2}$ evaluations to begin the division process in a
subtree, and we have $\beta^{k}$ total evaluations, $\tilde{\mathcal{A}}$ has at most
$\beta^2$ attempts to get one exception of Type II in a subtree, for a total
probability of $\beta^2\tilde{n}^{-1/5}$.

From Lemma \ref{kplus1}, the $\beta$ trees evaluated by $\tilde{\mathcal{B}}$ each
have a probability of $\lesssim(\beta^3+2\beta^2)^2\tilde{n}^{-2/5}$ of having
an exception of Type I, and a probability of $\lesssim (1+3\beta^{-1})2\tilde{n}^{-1/5}
$ for Type III, which we know is at least as large as the probability of $\tilde{\mathcal{A}}$
itself having a corresponding exception on a subtree.

Combining all possibilities, we get that the probability of an exception of any
type on one subtree in $\tilde{\mathcal{A}}$ is asymptotically smaller than
\begin{align}
\beta^2\tilde{n}^{-1/5}+ \beta((1+3\beta^{-1})2\tilde{n}^{-1/5}+
(\beta^3+2\beta^2)^2\tilde{n}^{-2/5}
)\notag \\
\lesssim\beta^3\tilde{n}^{-1/5}.&
\end{align}

Finally, we can determine the asymptotic probability of getting Type I and III exceptions
each  in $\tilde{\mathcal{A}}$:
\begin{itemize}
\item Type I: Using the simulation, this is bounded by the square of the probability of getting a single
exception, times the number of ways two subtrees can be chosen from
$\tilde{\mathcal{B}}'s$ $\beta$ available trees, giving $\beta^6\tilde{n}^{-2/5}\beta(\beta-1)/2$.
\item Type III: By the same reasoning as at the end of Lemma \ref{kplus1},
the probability of getting an exception with bias ratio $2\beta^{-3}$ is at
most $2\beta^{-3}\beta^3\tilde{n}^{-1/5}=2\tilde{n}^{-1/5}$.
\end{itemize}

Now our original algorithm $\mathcal{A}$ acting on a tree in $\mathcal{T}_k$ has
a probability of Type I and III exceptions that is
at most that of $\tilde{\mathcal{A}}$, but it additionally has
a $\tilde{n}^{-1/5}$ probability of a Type II exception.
Combining these three probabilities, we have
\begin{equation} \label{pkk}
p_{k}\lesssim\tilde{n}^{-1/5}+2\tilde{n}^{-1/5}+\beta^6\tilde{n}^{-2/5}\beta(\beta-1)/2
\lesssim3\tilde{n}^{-1/5}.
\end{equation}

\subsubsection{Bounding S} \label{Sbound}

Now that we've seen that the probability of an exception is low, we will bound the confidence
level assuming no exceptions occur. As in the $k=1$ case, we will put bounds on $S$ and
$D$ rather than on the confidence level directly.
We use an iterative proof, so we assume that Lemma \ref{mainlem} holds for
$\mathcal{T}_{k-1}$. In fact, to simplify the argument, if $\beta^{k-1}$
leaves descending from a node $a\in L$ have been evaluated, or if an exception
occured at $a$, we will give the
algorithm the correct value of $a$. The algorithm can do no worse with this information.

By induction, we can apply the confidence bound $c_{k-1,l}$ to nodes in $L$
since they are the roots of trees from $\mathcal{T}_{k-1}$. While this gives
us information about $p_{root}(a,r),$ what we really care about is
$p_{root}(g,r)$. To calculate the latter, we define $p_{tree}(t)$ for $t\in\mathcal{T}_1$ to be
the probability that a randomly chosen tree from $\mathcal{T}_k$ with its first
$n$ levels equal to $t$ is not excluded. Then
\begin{equation} \label{eqtree}
p_{tree}(t) =\prod_{a\in L}p_{root}(a,v(t,a)),
\end{equation}
and we let $p_{cat}(i,r)$ be the weighted average of $p_{tree}(t)$ for all
$t\in\mathcal{T}_{1,r,i}$. This weighting is due to uneven weightings of trees
in $\mathcal{G}_r$. Trees in $\mathcal{T}_1$ containing less likely trees from
$\mathcal{G}_r$ should have less weight than those containing more likely ones.

From $p_{tree}$, we can obtain a bound on the change in $p_{cat}(g,r)$ due to
a change in a $p_{root}(a,r)$. Suppose due to an evaluation on a leaf, $p_{root}(a,0)$
changes to $p_{root}(a,0)'$. Then every $p_{tree}(t)$ where $v(t,a)=0$ will change
by a factor of $p_{root}(a,0)'/p_{root}(a,0)$. Now each $p_{cat}$ is a weighted average of
some trees with $0$ at $a$, and some with $1$ at $a$. So at most, the multiplicative change
in each $p_{cat}$ due to $p_{root}(a,0)$ is equal to
$p_{root}(a,0)'/p_{root}(a,0)$. Then the change in $\log S$ is
at most $\log(p_{root}(a,0)'/p_{root}(a,0))$, which is just the change in $\log p_{root}(a,0)$.
We can go back and repeat the same logic with $a=1$, to get that the total change in $\log S$ is at most the
change in $\log p_{root}(a,0)$ plus the change in $\log p_{root}(a,1).$
This bound is only useful if $a$ is not learned with certainty.

Note that the confidence level does not change if we renormalize
our probabilities because the confidence level is a ratio of probabilities.
Since $p_{tree}(t)$ and $p_{root}(a,r)$ may be extremely small, we will
renormalize $p_{root}(a,r)$ to make the probabilities easier to work with:
we renormalize so that $p_{root}(a,0)+p_{root}(a,1)=2$.
We will just need to be careful when we apply the results of the $k=1$ case,
that we take the new normalization into account.

We will now prove several lemmas we will use in bounding $S$.
\begin{lemma} \label{Alemm}
Suppose no exception has occurred at $a\in L$, and less than $\beta^l$ leaves descending from
$a$ have been evaluated, and $p_{root}(a,r)_i$ is the value of
 $p_{root}(a,r)$. At some later time, such that still,
 no exception has occurred at $a\in L$, and less than $\beta^l$ leaves descending from
$a$ have been evaluated,
 $p_{root}(a,r)_f$ is the value of  $p_{root}(a,r)$. Then the change in
$\log p_{root}(a,r)=\log p_{root}(a,r)_f-\log p_{root}(a,r)_i$
obeys $\Delta\log p_{root}(a,r)\lesssim 2c_{k-1,l}$.
\end{lemma}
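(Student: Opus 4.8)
The plan is to run this as part of the outer induction on $k$ for Lemma~\ref{mainlem}, exploiting that a node $a\in L$ is by construction the root of a subtree drawn from $\mathcal{T}_{k-1}$. First I would apply the inductive hypothesis, Lemma~\ref{mainlem} for $\mathcal{T}_{k-1}$, to this subtree: since fewer than $\beta^l$ leaves descending from $a$ have been evaluated and no exception has occurred at $a$, and since (by the exception analysis of the preceding subsections) an advantage exceeding $c_{k-1,l}$ at $a$ would force an exception at $a$, the confidence level at $a$ --- which by Def.~\ref{confdef} is $|p_{root}(a,0)-p_{root}(a,1)|/(p_{root}(a,0)+p_{root}(a,1))$ for the subtree rooted at $a$ --- is at most $c_{k-1,l}$ at that moment, and this now holds deterministically rather than merely with probability $1-p_{k-1}$.

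Next I would convert this bound on the confidence level into a two-sided pin on each $p_{root}(a,r)$ using the normalization $p_{root}(a,0)+p_{root}(a,1)=2$ adopted just above the lemma. Under that normalization the confidence bound reads $|p_{root}(a,0)-p_{root}(a,1)|\le 2c_{k-1,l}$, which together with $p_{root}(a,0)+p_{root}(a,1)=2$ gives $1-c_{k-1,l}\le p_{root}(a,r)\le 1+c_{k-1,l}$ for each $r\in\{0,1\}$. The point is that this holds at \emph{both} of the two times mentioned in the lemma, since each of them satisfies the same hypotheses (no exception at $a$, fewer than $\beta^l$ evaluations below $a$) and the renormalization is reapplied at each time.

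The conclusion is then an elementary estimate: since $p_{root}(a,r)_i$ and $p_{root}(a,r)_f$ both lie in $[1-c_{k-1,l},\,1+c_{k-1,l}]$,
\[
\Delta\log p_{root}(a,r)=\log\frac{p_{root}(a,r)_f}{p_{root}(a,r)_i}\le\log\frac{1+c_{k-1,l}}{1-c_{k-1,l}}=2c_{k-1,l}+O(c_{k-1,l}^3).
\]
Because $c_{k-1,l}^2$ is a polynomial in $\beta$ times a negative power of $\tilde{n}$, the cubic correction is exactly of the form absorbed by the relation $\lesssim$, so $\Delta\log p_{root}(a,r)\lesssim 2c_{k-1,l}$; interchanging the roles of $i$ and $f$ bounds the decrease the same way, giving $|\Delta\log p_{root}(a,r)|\lesssim 2c_{k-1,l}$.

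The only real subtlety --- and the step I would be most careful about --- is the first one: conditioning on ``no exception has occurred at $a$'' is precisely what upgrades the probabilistic confidence guarantee of Lemma~\ref{mainlem} to a deterministic statement about $p_{root}(a,\cdot)$ at each relevant time, and one must also verify that the normalization $p_{root}(a,0)+p_{root}(a,1)=2$ used here is the same convention under which this lemma is later fed into the telescoping bound for $\log S$, so that the two contributions $\Delta\log p_{root}(a,0)$ and $\Delta\log p_{root}(a,1)$ combine consistently to $\lesssim 4c_{k-1,l}$ there. Everything past that is the inequality $\log\tfrac{1+c}{1-c}\le 2c\,(1+O(c^2))$.
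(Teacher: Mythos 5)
Your proof is correct and takes essentially the same route as the paper: both use the inductive confidence bound for $\mathcal{T}_{k-1}$ together with the normalization $p_{root}(a,0)+p_{root}(a,1)=2$ to pin $p_{root}(a,r)$ into the interval $[1-c_{k-1,l},\,1+c_{k-1,l}]$ at both times, and then bound the ratio $p_{root}(a,r)_f/p_{root}(a,r)_i$ by $(1+c_{k-1,l})/(1-c_{k-1,l})$, whose logarithm is $\lesssim 2c_{k-1,l}$. The point you flag about conditioning on ``no exception at $a$'' converting the probabilistic guarantee of Lemma~\ref{mainlem} into a pointwise bound is exactly the intended reading (it is the content of the remark that an advantage exceeding $c_{k-1,l}$ with fewer than $\beta^l$ queries forces an exception), and the paper invokes it silently.
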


\begin{proof}
By induction, if no exception occurs, and if less than $\beta^l$ leaves
descending from $a\in L$ have been evaluated, then using our normalization
condition we have
\begin{align} \label{newconf}
\frac{|p_{root}(a,0)-p_{root}(a,1)|}{p_{root}(a,0)+p_{root}(a,1)}
=|p_{root}(a,r)-1|
\lesssim c_{k-1,l}
\end{align}
for r $\in\{0,\}1$. Then with $p_{root}(a,r)_f$ and $p_{root}(a,r)_i$ as defined above,
 $p_{root}(a,r)_f=A\cdot p_{root}(a,r)_i$, where in order to
satisfy Eq. \ref{newconf},
\begin{align}
A\in\left[\frac{1-c_{k-1,l}}{1+c_{k-1,l}}, \frac{1+c_{k-1,l}}{1-c_{k-1,l}} \right]
&\Rightarrow|\log A|\lesssim 2c_{k-1,l} \notag \\
&\Rightarrow \Delta\log p_{root}(a,r)\lesssim 2c_{k-1,l}.
\end{align}
\end{proof}
We will use Lemma \ref{Alemm} when we determine how $S$ changes when less than
$\beta^{k-1}$ leaves from a node have been evaluated.

Type III exceptions are bad because they cause $S$ to be divided by a much larger
factor than $A_\tau$ during the division process on the main tree, where $A_\tau$
keeps track of the true probabilities during the division process, as in the $k=1$
case.

\begin{lemma} \label{lemmaIII}
If there is an exception in a subtree, but it is not a Type III exception, the the
 probability that $S$ changes and the amount that $S$ changes
differ by at most a factor
of $4\beta^3$.
\end{lemma}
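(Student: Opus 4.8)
The plan is to treat this lemma as the recursive analogue of the bound $p_\tau/q_\tau\le 2$ that drove the division process in the $k=1$ case. There, evaluating a leaf multiplied $S$ by a factor $q_\tau$ while the true probability of that outcome was $p_\tau\le 2q_\tau$, so $S$ never dropped much faster than the honest binary-search interval $A_\tau$. In the recursive setting the ``steps'' of the division process on the main $\mathcal{T}_1$ tree are the resolutions of the subtrees $a\in L$, and when a resolution is triggered by an exception at $a$ (rather than by exhausting the $\beta^{k-1}$ query budget) the factor by which $S$ moves can come apart from the true probability of the triggering outcome; I would bound this discrepancy by $4\beta^3$ whenever the exception is not of Type III.

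First I would identify the change in $S$. Since $S=\sum_{i,r}p_{cat}(g,i,r)$ and each $p_{cat}$ is an average of the products $p_{tree}(t)=\prod_{a'\in L}p_{root}(a',v(t,a'))$, the quantities that control $S$ are the \emph{within-subtree} numbers $p_{root}(a',\cdot)$. Let $b$ be the leaf descending from $a$ whose measured value $r'$ forces the exception. Evaluating $b$ first changes $p_{root}(a,\cdot)$ inside the subtree; by the inductive confidence bound, Lemma~\ref{Alemm}, and the renormalization $p_{root}(a,0)+p_{root}(a,1)=2$, the multiplicative change this induces in $S$ equals, up to a bounded constant, the within-subtree probability $\tilde p_{partial(a)}$ that $v(b)=r'$ --- exactly as for a single step of the $k=1$ division process. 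Declaring $v(a)$ known then zeroes $p_{root}(a,\overline{v(a)})$; since an exception only certifies confidence slightly above $c_{k-1,l}$, this last step moves $S$ by at most a further bounded constant, which I would make explicit using the renormalization. Hence $S$ is multiplied by $O(\tilde p_{partial(a)})$.

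Next I would lower bound the true probability of the outcome $v(b)=r'$. Repeating the computation behind Eq.~\ref{manyprobs} --- and using that, while we are still below confidence $c_{k,l}$, we have $p_{root}(g,\cdot)\approx 1/2$ --- this probability is within a factor $2$ of $\tilde p_{partial(g)}$, the probability computed from the leaves descending from $g$, hence it is $\gtrsim\tilde p_{partial(g)}/2$. Now I invoke the hypothesis: ``not a Type III exception'' means, by Definition~\ref{IIIdef} (equivalently Definition~\ref{tildetree} for the $\tilde{\mathcal{T}}_k$ version used in the induction), that the bias ratio obeys $\tilde p_{partial(g)}/\tilde p_{partial(a)}\ge 2\beta^{-3}$. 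Combining, the true probability is $\gtrsim\beta^{-3}\tilde p_{partial(a)}$ while the change in $S$ is $O(\tilde p_{partial(a)})$, so the two differ by at most the $\beta^3$ coming from the bias ratio together with the accumulated constant factors --- outside-$g$ information, the translation of $\Delta\log S$ into a within-subtree probability, and the zeroing step --- which I would arrange to total $4$, giving the factor $4\beta^3$.

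The hard part will be the bookkeeping around the resolution step: cleanly separating the change in $S$ produced by evaluating $b$ inside the subtree --- the one place the inductive hypothesis and Lemma~\ref{Alemm} enter --- from the change produced by subsequently declaring $v(a)$ known, and checking that the renormalization $p_{root}(a,0)+p_{root}(a,1)=2$ is applied consistently so that ``the multiplicative change in $S$'' genuinely is a within-subtree probability up to explicit constants. One must also verify that the earlier evaluations of leaves below $a$, before $a$ is resolved, never move $S$ by more than the stated amount, which is again Lemma~\ref{Alemm}.
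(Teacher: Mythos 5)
The structural plan is reasonable but the argument as written goes the wrong way, and this is not a constant-chasing issue — it is a sign error in the inequality that has to be proved.

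Recall how Lemma~\ref{lemmaIII} is used in Lemma~\ref{lemmaS}: the exception step is treated as one step of the Lemma~\ref{intlem} process, and what is needed is that the factor by which $S$ shrinks at that step is not too much \emph{smaller} than the probability of that shrink, i.e.\ $(\text{amount})/(\text{probability})\ge 1/(4\beta^3)$. This is the analogue of $q_\tau\ge p_\tau/2$ from the $k=1$ case. Your argument, however, produces an \emph{upper} bound on the amount (you write ``$S$ is multiplied by $O(\tilde p_{partial(a)})$'') and a \emph{lower} bound on the probability (``the true probability is $\gtrsim\beta^{-3}\tilde p_{partial(a)}$''). Those combine to $(\text{amount})/(\text{probability})\le O(\beta^3)$, which is the useless direction: it says the amount is not much larger than the probability, whereas we must show it is not much smaller. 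The paper instead argues by contrapositive: it \emph{assumes} $q_0'\cdot\frac{q_0p_{b|0}+q_1p_{b|1}}{q_0p_{b|0}}<1/(4\beta^3)$ (amount too small relative to probability), derives $q_0'<1/(4\beta^3)$ and $p_{b|1}/p_{b|0}<1/(2\beta^3)$, and plugs these into the bias ratio expression~\eqref{bratio} to conclude $p_{partial(g)}/p_{partial(a)}<2/\beta^3$, which is a Type III exception.

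Two of the intermediate claims are also not justified in the required direction. First, you identify ``the amount $S$ changes'' with $O(\tilde p_{partial(a)})$, but what actually controls the drop in $S$ at the resolution of $a$ is $q_0'$, the within-$g$ probability that $v(a)$ equals the revealed value: once $p_{root}(a,\overline{v(a)})$ is zeroed and renormalized, $S_{\text{after}}/S_{\text{before}}\approx 2q_0'$. This is a statement about the posterior on $v(a)$, whereas $\tilde p_{partial(a)}=(p_{b|0}+p_{b|1})/2$ is a statement about the posterior on $v(b)$; the former can be close to $1$ while the latter is tiny. Second, the lower bound ``true probability $\gtrsim\tilde p_{partial(g)}/2$'' does not follow from Eq.~\eqref{manyprobs}, which gives only an upper bound $p_\tau\le 2q_\tau$; in the other direction the outside information at $a$ can drive the true probability well below the within-$g$ probability (take $p_{b|1}\gg p_{b|0}$ and outside knowledge concentrating on $v(a)=0$). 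So even setting aside the sign of the final inequality, the two bounds you combine are not available.

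To repair this you would need to lower-bound the amount $q_0'$ and upper-bound the conditional probability $q_0 p_{b|0}/(q_0 p_{b|0}+q_1 p_{b|1})$, which is exactly what the contrapositive argument accomplishes: assuming the ratio is small forces both $q_0'$ and $p_{b|1}/p_{b|0}$ to be small, and those two smallness conditions together collapse the bias ratio.
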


\begin{proof}

We will prove the contrapositive.
Without loss of generality, suppose $v(b)=0$ gives
an exception at $a$, and if $v(b)=0$, then we know $v(a)=0$ with high probability.
Let $q_r$ be the probability that $v(a)=r$ including all possible information.
Let $q_r'$ be the probability that $v(a)=r$ based only on evaluated leaves descending
from $g$, so no outside information.
Let the probability that $v(b)=0$ conditioned on $v(a)=r$ be $p_{b|r}$.

Conditioning on $v(b)=0$, the probability that $S$ changes
is equal to the probability that
$v(a)=0$ including all possible information, which by Bayes Rule is
$q_0p_{b|0}/(q_0p_{b|0}+q_1p_{b|1})$. Now the factor by which $S$ changes
is simply $q_0'$, since $S$ only depends on nodes descending
from $g$. To obtain a contradiction, we assume that these two terms differ by a lot. In particular, we assume

\begin{align} \label{newmanyprob1s}
q_0'\left(\frac{q_0p_{b|0}+q_1p_{b|1}}{q_0p_{b|0}}\right)<\frac{1}{4\beta^3}
\end{align}

We can rewrite this as
\begin{align} \label{newmanyprob2s}
q_0'\left(1+\frac{q_1p_{b|1}}{q_0p_{b|0}}\right)<\frac{1}{4\beta^3}
\end{align}

\noindent This tells us that (i) $q_0'<1/(4\beta^3)$ . Furthermore, using Eq.
\ref{manyprobs}, we know $q_0'\geq q_0/2$. Using this,
and the fact that $q_0+q_1=1$, we can write Eq. \ref{newmanyprob2s} as:

\begin{align} \label{newmanyprob3s}
\left(\frac{1+(q_1p_{b|1})/(q_0p_{b|0})}{1+\frac{q_1}{q_0}}\right)<\frac{1}{2\beta^3}
\end{align}

\noindent From this equation, we have the restriction $q_1/q_0>2\beta^3$ ($2\beta^3\gg1$ so
 we can ignore the $1$ in the denominator), and (ii) $p_{b|1}/p_{b|0}<1/(2\beta^3)$.

Now let's consider $p_{partial(g)}/p_{partial(a)}$:

\begin{align} \label{bratio}
\frac{p_{partial(g)}}{p_{partial(a)}}=
\frac{p_{b|0}q_0'+
p_{b|1}q_1'}{(p_{b|0}+p_{b|1})/2}
\end{align}

\noindent From (ii), let $p_{b|0}/p_{b|1}=2\beta^3M_1$, where $M_1>1$. From (i), let
 $q_0'=1/(4\beta^3M_2)$ where $M_2>1$, so then
$q_1'=1-1/(4\beta^3M_2)$. Then

\begin{align} \label{bratio2}
\frac{p_{partial(g)}}{p_{partial(a)}}=\frac{2\beta^3M_1/(4\beta^3M_2)+(1-1/(4\beta^3M_2))}
{\beta^3M_1+1/2}.
\end{align}

Since $\beta^3M_1\gg1/2$, we have

\begin{align} \label{bratio3}
\frac{p_{partial(g)}}{p_{partial(a)}}=\frac{1}{2\beta^3M_2}+\frac{1}{\beta^3M_1}-\frac{1}{4\beta^3M_1}
<\frac{2}{\beta^3}.
\end{align}
\end{proof}

We can now prove the following lemma:
\begin{lemma} \label{lemmaS} For the case (k,l) in Lemma \ref{mainlem}, if an
exception on the main tree does not occur, then if an algorithm evaluates less
than $\beta^l$ leaves, $S\gtrsim \tilde{n}^{3/5}\beta^{-3}/2$ if $k=l$ and
$S\gtrsim2\tilde{n}$ otherwise.
\end{lemma}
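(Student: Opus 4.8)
The plan is to track the single quantity $S=S(g)$ through the run of the algorithm, starting from its initial value $S=2\tilde n$ (when every $p_{cat}(g,i,r)=1$), and to lower-bound it by controlling the only two mechanisms that can decrease it: (i) the fluctuations of $p_{root}(a,r)$ at nodes $a\in L$ that have not yet been learned with certainty, and (ii) the \emph{division process} on the main ($\mathcal{T}_1$) tree, which is driven by those $a\in L$ that do become certain — either after $\beta^{k-1}$ of their descendant leaves have been queried, or because an exception occurred at the $\mathcal{T}_{k-1}$-subtree rooted at $a$ (in which case we hand the algorithm $v(a)$). I would bound each mechanism's contribution to $\Delta\log S$ separately and combine, finishing with the split into $l=k$ and $l<k$.

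For mechanism (i) I would invoke the inductive hypothesis (Lemma \ref{mainlem} for $\mathcal{T}_{k-1}$): while no exception has occurred at $a$ and fewer than $\beta^{j}$ descendant leaves of $a$ have been queried, Lemma \ref{Alemm} bounds the total change in $\log p_{root}(a,r)$ by $\lesssim 2c_{k-1,j}$, and the paragraph preceding Lemma \ref{Alemm} shows that each change to $p_{root}(a,\cdot)$ moves $\log S$ by at most $\Delta\log p_{root}(a,0)+\Delta\log p_{root}(a,1)$. Summing over all $a\in L$ that are ever queried and grouping them by query count — at most $\beta^{l-j+1}$ nodes receive between $\beta^{j-1}$ and $\beta^{j}$ queries — the net decrease of $\log S$ from mechanism (i) is $\lesssim\sum_{j=1}^{\min(l,k-1)}\beta^{l-j+1}\cdot 4c_{k-1,j}$. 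Substituting the inductive form $c_{k-1,j}\lesssim(8n_0)^{k-j}\tilde n^{-3(k-j)/5}\beta^{4+6(k-1-j)}$ and setting $i=k-j$, this becomes (up to a $\beta^{O(1)}$ prefactor) a geometric series in $[\,8n_0\beta^{7}\tilde n^{-3/5}\,]^{i}$ whose ratio tends to $0$; hence the whole sum is $\lesssim(8n_0)\beta^{O(1)}\tilde n^{-3/5}$, which is negligible in the $\lesssim/\gtrsim$ sense. So mechanism (i) alone leaves $S\gtrsim 2\tilde n$.

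For mechanism (ii) I would run the interval argument of the $k=1$ case: each time a node $a\in L$ becomes certain we perform one step of the division process, shrinking $S$ by the $g$-information probability $q$ of the revealed value and shrinking the true-probability interval $A_\tau$ by the true probability $p$ of that value. For a node learned via the $\beta^{k-1}$-query rule, Eq.~\ref{manyprobs} gives $p/q\le 2$ exactly as for $k=1$; since no Type I exception occurs, at most one node of $L$ is learned via a subtree exception, and that exception is not of Type III (no exception occurred at $g$), so Lemma \ref{lemmaIII} gives $p/q\le 4\beta^3$ for that single step. With fewer than $\beta^k$ total queries there are fewer than $\beta$ query-rule steps plus at most one exception step, so $S$ lags the interval by a factor at most $2^{\beta}\cdot 4\beta^3=4\beta^3\tilde n^{1/10}$. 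Since no Type II exception occurs on the main tree, Lemma \ref{intlem} (with $m=\beta$, $A_0=2\tilde n$, $F=\tilde n^{-3/10}$, and the renormalization $p_{root}(a,0)+p_{root}(a,1)=2$) gives $A_m\ge 2\tilde n^{7/10}$, whence $S\gtrsim 2\tilde n^{7/10}/(4\beta^3\tilde n^{1/10})=\tilde n^{3/5}\beta^{-3}/2$, which is the $k=l$ bound (combined with the negligible loss from mechanism (i)). When $l<k$, fewer than $\beta^{l}\le\beta^{k-1}$ leaves are queried in total, so no node of $L$ reaches the $\beta^{k-1}$ threshold, the division process never acquires two certain values and hence never begins; mechanism (ii) is inert and only mechanism (i) acts, giving $S\gtrsim 2\tilde n$.

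The main obstacle is mechanism (i) under an adversarial, time-varying schedule: the algorithm may query a handful of leaves in each of very many subtrees and repeatedly revisit them, so the bound on $\Delta\log S$ must hold uniformly over the whole history rather than just at the final time — this is precisely what Lemma \ref{Alemm} supplies, its proof relying on the inductive confidence bound $c_{k-1,j}$ holding at every intermediate moment. The second delicate point is the bookkeeping that ``no exception at $g$'' forces at most one subtree-exception step, which is what confines the extra loss in mechanism (ii) to a single factor $4\beta^3$ instead of one factor per learned node, and thereby produces exactly the $\beta^{-3}$ appearing in the statement.
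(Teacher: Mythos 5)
Your proposal follows the paper's proof in all essentials: the same two-mechanism decomposition into (i) losses from partially evaluated nodes of $L$, controlled via Lemma \ref{Alemm} and the $K_j$ grouping, and (ii) losses from the division process on the main tree, controlled via Eq.~\ref{manyprobs}, Lemma \ref{intlem}, and Lemma \ref{lemmaIII}; and the arithmetic $2\tilde n^{7/10}/(2^\beta\cdot 4\beta^3)=\tilde n^{3/5}\beta^{-3}/2$ matches the paper's bound for $l=k$. Two small places where the paper is more explicit and your sketch glosses a step. First, Lemma \ref{Alemm} only controls $\Delta\log p_{root}$ within a period during which no node of $L$ becomes certain, so the paper partitions the history into the $\beta+1$ intervals delimited by certainty events and multiplies the $K_j$ sum by $\beta+1$; this extra factor does not affect negligibility, but omitting it leaves the invocation of Lemma \ref{Alemm} technically unjustified. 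Second, for $l<k$ you declare ``mechanism (ii) is inert'' because the division process never acquires two certain values, but a single node $a\in L$ can still be learned via a subtree exception, and on its face this jump of $p_{root}(a,\cdot)$ to $(0,2)$ could move $S$; the paper closes this by computing that, under the normalization $p_{root}(a,0)+p_{root}(a,1)=2$, learning $a$ sends every $p_{cat}(i,r)$ to exactly $1$ (since $p_{cat}(i,r)$ is an equally weighted average over trees $t$ with $v(t,a)=0$ and $v(t,a)=1$), so $S$ remains $2\tilde n$. With those two points made explicit, your argument is the paper's.
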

\begin{proof}
For the case $l=k$, we know with high probability at most $\beta$ nodes in $L$
from directly evaluating descendents of those nodes.
(We learn ($\beta-1$) nodes from evaluating $\beta^{k-1}$ leaves on each of
$\beta-1$ subtrees, and 1 from one possible exception on a subtree). Thus at
most $\beta$ times during the algorithm, a $p_{root}(a,r)$ is set to 0, and its
pair is set to 2. When the $\beta^{(k-1)}$'th leaf descending from a node $a$
is evaluated, this can be thought of as a query for $a$. So the division process
proceeds identically as in the case $k=1$, except for two differences.
First, when the $j^{th}$ node in $L$
is learned with certainty, we must multiply $S$ by a normalizing factor $F_j$. This
is due to renormalizing $p_{root}(a,0)+p_{root}(a,1)=2$ in the $k>1$ case,
whereas in the $k=1$ case, when a node $a$ at depth $n$ is learned, $p_{root}(a,0)+p_{root}(a,1)=1$.
(Note $F_j>1$ for all $j$, and further, we don't actually
care what the $F_j$ are, since $D$ will be multiplied by the same $F_j's$.)
Second, by Lemma \ref{lemmaIII}, the one possible exception in a subtree causes $S$ to decrease
by at most an extra factor of $4\beta^3$ as compared to the
$k=1$ case. Combining these terms, we get that the
decrease in $\log S$, $\Delta \log S $, is at most $(2\log\tilde{n})/5+\log4\beta^3-\sum_j\log F_j$.

Next we need to estimate the decrease in $\log S$ due to nodes in $L$ where less
than $\beta^{k-1}$ of their leaves have been evaluated. There are $\beta$ times
in the algorithm when we learn the value of a node in $L$. There are then
$\beta+1$ intervals, between, before, and after those times. We will determine
the maximum amount that $\log S$ can change during a single interval, and
then multiply by $\beta+1$ to account for the change during all intervals. We will be considering
intervals because we will be using Lemma \ref{Alemm}, which only applies in periods
in between when the values of nodes in $L$ are learned.

For a given interval, let $M(a)$ be the number of leaves descending from $a\in L$
that have been evaluated. Then for $1\leq j\leq l$ let
$K_j=\{a\in L:\beta^{j-1}\leq M(a)<\beta^j\}$. Because less than $\beta^k$
leaves have been evaluated in total, we have the bound $|K_j|\leq\beta^{k-j+1}.$

We will bound the change in $S$ using the relation between $p_{root}$, $p_{tree}$,
and $p_{cat}$ described in the paragraph following Eq. \ref{eqtree} (as opposed
to the bound used in Lemma \ref{intlem}). This bound
applies regardless of the order of evaluations, so we are free to reorder
to group together evaluations
descending from the same node in $L$. This allows us to use Lemma \ref{Alemm}:
each of the nodes in $K_j$ can change $\log S$ by at most $4c_{k-1,j}$
 ($2c_{k-1,j}$ from $p_{root}(a,0)$ and $2c_{k-1,j}$ from $p_{root}(a,1)$). Since
there are at most $\beta^{l-j+1}$ nodes in $K_j$, the nodes in $K_j$ will change
$\log S$ by at most $4\beta^{l-j+1}c_{k-1,j}.$
Summing over the $K_j$ in all $(\beta+1)$ intervals, we have,
\begin{align}
\Delta\log S\lesssim &4(\beta+1)\notag \\
&\times(c_{k-1,k-1}\beta^2+c_{k-1,k-2}\beta^3+\dots+c_{k-1,1}\beta^{k+1})
\end{align}
Combining with the terms from the division process and subtree exception, we obtain
\begin{align}
\Delta\log S\lesssim&\frac{2\log\tilde{n}}{5}+\log(4\beta^3)-\sum_j\log F_j\notag \\
&+
4(\beta+1)(c_{k-1,k-1}\beta^2+\dots+c_{k-1,1}\beta^{k+1})
\end{align}
Since this is an induction step, we can plug in values for $c_{k-1,j}$ from Lemma
\ref{mainlem}, which, using the asymptotic definition, gives
\begin{equation}
\Delta\log S\lesssim \frac{2\log\tilde{n}}{5}+\log(4\beta^3)-\sum_j\log F_j
\end{equation}

Finally, using the fact that initially $S=2\tilde{n}$, we find
$S\gtrsim(\Pi_jF_j\tilde{n}^{3/5}\beta^{-3})$.

For the case $l<k$ by similar reasoning as above, we can reorder to put the
evaluations of leaves from the subtree with the one possible exception first.
Below in Lemma \ref{IIIlk}, we show that a Type III exception cannot occur for $l<k$, so we
don't need to worry about the extra decrease from the bias ratio.
But the one exception gives us knowledge of only one node $a$ in $L$. Suppose $a=1$.
Then all trees $t$ with $a=1$ will have $p_{tree}(t)=2$ and all trees $t$ with
$a=0$ will have $p_{tree}(t)=0$. Since each $p_{cat}$ involves an equally weighted
average of of trees $t$ with $a=0$ and $a=1$, each $p_{cat}$ will be set to 1 when $a$
is learned, so $S$ is unchanged when $a$ is learned and remains $2\tilde{n}$. Otherwise,
we still have the reduction in $S$ due to the $K_j$ sets as above, except with
now only one interval, so we obtain:
\begin{equation}
\Delta\log S\lesssim 4(c_{k-1,l}\beta+\dots+c_{k-1,1}\beta^{l})
\end{equation}
By induction we can plug in values for $c_{k-1,j}$, and using the asymptotic
relation, we find at the end of the computation, $S\gtrsim 2\tilde{n}$
\end{proof}

\begin{lemma} \label{IIIlk}
Type III exceptions can only occur when $l=k$.
\end{lemma}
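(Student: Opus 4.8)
The plan is to show that for $l<k$ the bias ratio $p_{partial(g)}/p_{partial(a)}$ of any subtree exception is $1\pm o(1)$, hence never below $2\beta^{-3}$, so that by Definition \ref{IIIdef} such an exception cannot be of Type III. The first ingredient is a counting observation: when $l<k$ the algorithm makes fewer than $\beta^{l}\le\beta^{k-1}$ evaluations in all, strictly fewer than the $\beta^{k-1}$ needed on a single $\mathcal{T}_{k-1}$-subtree to be handed its root value outright. So the only way the value $v(a')$ of a node $a'\in L$ can become known with certainty is through an exception at the subtree rooted at $a'$. In the situation of the lemma there is exactly one subtree exception, at $a$, triggered by the pivotal leaf $b$ of Definition \ref{IIIdef} (a second subtree exception would be a Type I exception), and at the instant $b$ is about to be evaluated that exception has not yet fired; hence at that instant no node of $L$ --- including $a$ itself --- has a known value, and fewer than $\beta^{l}\le\beta^{k-1}$ leaves have been queried below each of them.

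Next I would show that at that instant every node $a'\in L$ has a near-uniform posterior value given all evaluated leaves below $g$ (excluding $b$). Conditioned on the ``$\mathcal{T}_1$-state'' (the fault category $i$, the root value, and the $\mathcal{G}$-choices at the fault level), the values $\{v(a')\}_{a'\in L}$ are determined and the evaluated leaves below distinct nodes of $L$ are conditionally independent, so the likelihood of the transcript factors as $\prod_{a'\in L}P(\text{ev below }a'\mid v(a'))$, with a trivial factor for each untouched $a'$. For each touched $a'$ --- none of which, by hypothesis, has suffered an exception, and each of which has been queried fewer than $\beta^{l}\le\beta^{k-1}$ times --- the inductive form of Lemma \ref{mainlem} applied to the $\mathcal{T}_{k-1}$-subtree at $a'$ bounds its confidence by $c_{k-1,l}$, whence (Bayes, exactly as in Eq. \ref{manyprobs}) the ratio $P(\text{ev below }a'\mid v(a')=0)/P(\text{ev below }a'\mid v(a')=1)$ lies in $1\pm O(c_{k-1,l})$. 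Over the at most $\beta^{l}$ touched subtrees the product of these ratios lies in $1\pm O(\beta^{l}c_{k-1,l})$, and $\beta^{l}c_{k-1,l}\lesssim 8n_{0}\,\beta^{l+4+6(k-l-1)}\,\tilde n^{-3/5}\to 0$ because $k$ is polylogarithmic in $n$ while $\beta=\lfloor\log\tilde n/10\rfloor$. Hence the posterior on the $\mathcal{T}_1$-state differs from its prior only by a pointwise factor $1\pm o(1)$; marginalizing, and using that the prior marginal of $v(a)$ is $1/2$ --- since in $\mathcal{T}_1$ the value of a depth-$n$ node coincides with a leaf of a $\mathcal{G}_r$ tree, which is marginally uniform by the hypothesis of Theorem \ref{mainthm} --- we get $P(v(a)=r\mid\text{ev below }g)=1/2\pm o(1)$ for either $r$, and the same argument restricted to leaves below $a$ gives $P(v(a)=r\mid\text{ev below }a)=1/2\pm o(1)$.

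Finally I would assemble the bias ratio as in the proof of Lemma \ref{lemmaIII}. Writing $p_{b|r}$ for the probability that $v(b)=r'$ given $v(a)=r$ and the leaves below $a$, conditional independence of the $a$-subtree given $v(a)$ yields $p_{partial(a)}=(1\pm o(1))(p_{b|0}+p_{b|1})/2$ and $p_{partial(g)}=(1\pm o(1))(p_{b|0}+p_{b|1})/2$, the very combination appearing in Eq. \ref{bratio}. Dividing, $p_{partial(g)}/p_{partial(a)}=1\pm o(1)$, which for $n$ large exceeds $2\beta^{-3}$; so the exception at $a$ fails the defining inequality of Definition \ref{IIIdef} and is not of Type III. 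Thus Type III exceptions can occur only when $l=k$, consistently with Lemma \ref{lemmaS}: only with a budget of $\beta^{k}$ evaluations can the algorithm learn up to $\beta-1$ nodes of $L$ outright and thereby create genuine outside bias on the remaining ones.

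The step I expect to be the main obstacle is the likelihood bookkeeping of the second paragraph --- converting the inductive \emph{confidence} bound $c_{k-1,l}$ at each subtree into a \emph{likelihood-ratio} bound, and controlling the product of up to $\beta^{l}$ near-unit factors so that the accumulated deviation $\beta^{l}c_{k-1,l}$ stays negligible in the $\lesssim$ sense (this is where $k=\mathrm{poly}(\log n)$ enters). The remainder is routine Bayes'-rule manipulation of the kind already carried out around Eq. \ref{manyprobs} and in the proof of Lemma \ref{lemmaIII}.
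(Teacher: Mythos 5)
Your overall strategy is exactly the paper's: argue that with $l<k$ no node of $L$ can be learned with certainty, so each touched subtree carries only an inductively small confidence, deduce $q_r'\approx 1/2$, and substitute into Eq.~\ref{bratio} to conclude the bias ratio is near $1$, far above $2\beta^{-3}$. The difference is in the accounting step, and there is a genuine gap there.

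You bound the accumulated distortion by ``$\beta^{l}$ touched subtrees each contributing a ratio in $1\pm O(c_{k-1,l})$,'' giving $1\pm O(\beta^{l}c_{k-1,l})$, and then claim $\beta^{l}c_{k-1,l}\to 0$ because $k$ is polylogarithmic. That claim is false in the relevant regime. Taking the worst case $l=k-1$ you yourself computed $\beta^{l}c_{k-1,l}\lesssim 8n_{0}\,\beta^{k+3}\,\tilde n^{-3/5}$, and for $k=\Theta(\log n)$ (which is the regime used in Theorem~\ref{classthem}) one has $\beta^{k+3}\approx(\log n)^{\log n}=n^{\log\log n}$, so $\beta^{k+3}\tilde n^{-3/5}\to\infty$; the product $(1+c_{k-1,l})^{\beta^{l}}$ can blow up rather than tend to $1$. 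The problem is that applying the single uniform bound $c_{k-1,l}$ to all $\beta^{l}$ touched subtrees overcounts badly: when a subtree has received only a few queries, the applicable inductive confidence is $c_{k-1,j}$ for small $j$, which is \emph{much} smaller than $c_{k-1,l}$, and a budget of $\beta^{l}$ queries cannot produce $\beta^{l}$ subtrees each carrying the full $c_{k-1,l}$ confidence. The paper avoids this by the $K_{j}$ decomposition already used in Lemma~\ref{lemmaS}: there are at most $\beta^{l-j+1}$ subtrees in $K_{j}$, each contributing $c_{k-1,j}$, and the resulting sum $\sum_{j}\beta^{l-j+1}c_{k-1,j}$ is dominated by its $j=l$ term, giving $O(\beta\, c_{k-1,l})\lesssim O(\beta^{5}\tilde n^{-3/5})$, which \emph{does} fit the $\lesssim$ convention with a fixed power of $\beta$. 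If you replace your crude count by this $K_{j}$ bookkeeping (exactly as in the paper's bound $\prod_{j=1}^{l}(1\pm c_{k-1,j})^{\beta^{l-j+1}}$), the rest of your argument goes through and you recover the paper's proof.
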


\begin{proof}
We will show that when $l<k$, the bias ratio cannot be smaller than $2/\beta^3$.
Suppose $l<k$, and we are concerned that a Type III exception might happen at a subtree
rooted at $a\in L$, where $g$ is the root, due to measuring a leaf $b$.
 Let $q_r'$ be the probability that $v(a)=r$
based on all evaluated leaves descending from $g$. We will show
 $q_r'$ is close to $1/2$ before $b$ is evaluated, and therefore that the bias ratio is close to 1.

We can write:
\begin{align}
q_0'&=N\sum_{t:v(t,a)=0}w_tp_{tree}(t)\notag \\
&=N\sum_{t:v(t,a)=0}w_t\prod_{a'\in L}p_{root}(a',v(t,a')),
\end{align}
where $N$ is a normalization factor to ensure $q_0'+q_1'=1$, and $w_t$ is the
probability associated with $t$ in the distribution $\mathcal{T}_1$. $q_1'$ can
be written similarly. Note there
can be no exception giving with high probability the value of any other
$a'\in L$ before $b$ is measured,
as if there were, we couldn't have a
Type III exception at $a$, since the exception at $a$ would simply become a
Type I exception. Also, since $l<k$, less than
$\beta^{k-1}$ leaves have been evaluated, we can apply Lemma \ref{Alemm}
to all nodes $a'\in L$.

Using the sets $K_j$ from Lemma \ref{lemmaS}, we have that the number of
nodes in $L$ that have $\beta^j$ leaves evaluated is less than $\beta^{l-j+1}$.
Each of those nodes can have $p_{root}(a',r)$ that is at most at most
$1+c_{k-1,j}$ and at least $1-c_{k-1,j}$. If any node $a'\in L$ has not
had any evaluations, then $p_{root}(a',r)=1$.

Then
\begin{align}
q_0'<N\sum_{t:v(t,a)=0}w_t\prod_{j=1}^l(1+c_{k-1,j})^{\beta^{l-j+1}}\lesssim N/2,
\end{align}
and
\begin{align}
q_0'>N\sum_{t:v(t,a)=0}w_t\prod_{j=1}^l(1-c_{k-1,j})^{\beta^{l-j+1}}\gtrsim N/2,
\end{align}
So up to negligible factors, $q_0'=N/2$. The same reasoning holds for $q_1'$, and
using the normalization $q_1'+q_0'=1$, we have
$q_0'=q_1'=1/2$. Then the bias ratio is (from Eq. \ref{bratio})
\begin{align}
\frac{p_{partial(g)}}{p_{partial(a)}}=
\frac{p_{b|0}q_0'+
p_{b|1}q_1'}{(p_{b|0}+p_{b|1})/2}=1
\end{align}
up to negligible factors.
\end{proof}

\subsubsection{Bounding $D$}

To prove the lower bound on $D$, we reorder so that evaluations of leaves
descending from a single node in $L$ are grouped together (there are no intervals
this time). We are allowed to do this because $D$ only depends on the final
set of leaves that are evaluated, so if we reorder and find a bound on $D$
with the new ordering, it must also be a bound on any other ordering.
Furthermore, none of the Lemmas we use in proving the bound on $D$, or the probability of
exception when determining $D$
depend on ordering, so we can reorder freely.

We put evaluations for leaves descending from the at most $\beta$
sure nodes in $L$ first (the sure nodes are those obtained either by the one
exception on a subtree or by being given the value by the algorithm after
$\beta^{k-1}$ nodes of a subtree are evaluated). We call Phase I the evaluation
of leaves descending from those sure nodes, and we call Phase II the evaluation
of leaves after Phase I. Phase I evaluations exactly mimic the $k=1$ case for
bounding $D$, where the last evaluation to each node $a\in L$ can be thought
of as a query of that node. Thus in Phase I, the evolution of $p_{cat}(i,r)$
follows the $k=1$ case, except for the extra factors $F_j$ described above.

From this point, we will only discuss what happens in Phase II. Phase II
evaluations don't lead to high confidence knowledge of nodes in $L$, but we
still need to bound how much confidence can be obtained. For each $a\in L$
whose leaves are evaluated in Phase II, let $p_{node|i,r}(a)$ be the
probability that a tree remaining in $\mathcal{T}_{k,r,i}$ will have $v(a)=0$.
Analogously to the $k=1$ case, at the beginning of Phase II, for each $a$
there is a height $h$ that is the distance between $a$ and sure nodes, such
that $p_{node|i,r}(a)=1/2$ for $i>h$, and $p_{node|i,r}(a)=\{0,1\}$ for
$i+n_0\leq h$. If $p_{node|i,r}(a)=\{0,1\}$, it won't change with further
evaluations. However, we must bound how much $p_{node|i,r}(a)$ can change
if at the beginning of Phase II it is equal to $1/2$.

First we will bound how much $p_{node|i,r}(a)$ can change due to evaluations
on other nodes $a'\in L$ during Phase II, where $p_{node|i,r}(a)$ is initially
$1/2$. As in Lemma \ref{lemmaS}, we group $a'$ into the sets $K_j$. For any
tree $t\in \mathcal{T}_1$, and $a'\in K_j$, then
$\Delta\log p_{tree}(t)\leq 2c_{k-1,j}$ by Lemma \ref{Alemm}. Since we are
considering the case where $p_{node|i,r}(a)=1/2$ initially, half of the trees
$t$ with category $i$ and root $r$ have $a=0$ and half have $a=1$. Since each
$a'$ can change $p_{tree}$ by at most $1\pm2c_{k-1,j}$, the worst case is when
all for all trees $t$ with $a=0$, their probabilities change by $1+2c_{k-1,j}$,
and all trees with $a=1$ change by $1-2c_{k-1,j}$, for each $a'\in K_j$. Thus
we have $\Delta\log p_{node|i,r}(b)\leq 4c_{k-1,j}$. Summing over the sets
$K_j$, we obtain

\begin{equation}
\Delta\log p_{node|i,r}(a)\lesssim4(c_{k-1,l}\beta+c_{k-1,l-1}\beta^2+\dots+c_{k-1,1}\beta^l)
\end{equation}

\noindent where $l$ is strictly less than $k$, since we are in Phase II. This means
\begin{equation} \label{pnodediff}
|p_{node|i,r}(a)-\frac{1}{2}|\lesssim4(c_{k-1,l}\beta+c_{k-1,l-1}\beta^2+\dots+c_{k-1,1}\beta^l).
\end{equation}
We will use this result in calculating $D$.

To compute $D$, we need to track the evolution of $p_{cat}(i,r)$ in Phase II.
We will show that we can do this in terms of $p_{node|i,r}(a)$ and
$p_{root}(a,r)$. Let $p_{cat}(i,r)_i$ be the value of $p_{cat}(i,r)$ before any
leaves descending from $a$ are evaluated, and let $p_{cat}(i,r)_f$ be the value
of $p_{cat}(i,r)$ after all leaves descending from $a$ are evaluated. Then

\begin{align} \label{catdev}
p_{cat}(i,r)_f=& p_{cat}(i,r)_i(p_{node|i,r}(a)p_{root}(a,0)\notag \\
&+(1-p_{node|i,r}(a))p_{root}(a,1),
\end{align}

\noindent where $p_{root}(a,0)$, $p_{root}(a,1)$, and $p_{node|i,r}(a)$ are the values
after all leaves descending from $a$ are evaluated. By induction, we have
$|p_{root}(a,r)-1|\lesssim c_{k-1,j}$ if $j$ leaves of $a$ are evaluated.

Now we will see how $p_{cat}(i,r)$ is updated in the three cases:

\begin{itemize}

\item $\bf{[i+n_0\leq h]}$ Here we know $p_{node|i,0}(a)$ and $p_{node|i,1}(a)$
are either both $0$ or both $1$. Then by Eq. (\ref{catdev}), and using the bound
on $p_{root}(a,r)$ from Eq. (\ref{newconf}), $p_{cat}(i,r)$ will be multiplied
by at most $1+c_{k-1,j}$.
\item $\bf{[i>h]}$ Now there can be a difference between $p_{node|i,0}(a)$ and
$p_{node|i,1}(a)$, but this difference in bounded by Eq. (\ref{pnodediff}).
We have the bound on $p_{root}(a,r)$ from Eq. (\ref{newconf}). Plugging into
Eq. (\ref{catdev}), and using the fact that $p_{cat}(i,r)_i\leq1$, we obtain
$p_{cat}(i,r)_f$ as multiplying $p_{cat}(i,r)_i$ by at most $1+c_{k-1,j}$,
and then adding a quantity
$\gamma\lesssim 16c_{k-1,j}(c_{k-1,l}\beta+c_{k-1,l-1}\beta^2+\dots+c_{k-1,1}\beta^l)$.
Since the added term is from a single $i$, the addition to $D$ from all $i$'s
is $\lesssim \tilde{n}16c_{k-1,j}(c_{k-1,l}\beta+c_{k-1,l-1}\beta^2+\dots+c_{k-1,1}\beta^l)$

\item $\bf{[i\leq h<i+n_0]}$ Without knowing the details of the direct function
being evaluated by the tree, we have no knowledge of $p_{node|i,r}(a)$.
However, $|p_{root}(a,0)-p_{root}(a,1)|\lesssim 2c_{k-1,j}$, so plugging into
Eq. (\ref{catdev}) and using the fact that $p_{cat}(i,r)_i\leq1$ and
$p_{node|i,r}(a)\leq1$, we get that this step adds a quantity
$\gamma\lesssim 2n_0c_{k-1,j}$ to $D$.
\end{itemize}

Now we need to sum these contributions for all $a\in L$ in Phase II. This
summation simply expands each of the $c_{k-1,j}$ in the above cases into
$(c_{k-1,l}\beta+c_{k-1,l-1}\beta^2+\dots+c_{k-1,1}\beta^l)$. Putting the additions before
the multiplications, if $D_i$ is $D$ at the beginning of Phase II, and
$D_f$ is $D$ at the end of Phase II, we have
\begin{align}
D_f\lesssim&[D_i+16\tilde{n}(c_{k-1,l}\beta+c_{k-1,l-1}\beta^2+\dots+c_{k-1,1}\beta^l)^2\notag\\
&+2n_0(c_{k-1,l}\beta+c_{k-1,l-1}\beta^2+\dots+c_{k-1,1}\beta^l)]\notag\\
&\times(1+(c_{k-1,l}\beta+c_{k-1,l-1}\beta^2+\dots+c_{k-1,1}\beta^l))
\end{align}
Using our inductive assumption to plug in for $c_{k-1,l-i}$ from Lemma \ref{mainlem}, we
find that each term $c_{k-1,l-i}\beta^{i+1}\lesssim 8n_0\tilde{n}^{-3/5}\beta^7$.
So, aside from $D_i$, the largest term in the equation scales like $\tilde{n}^{-1/5}$,
which goes to 0 for large $\tilde{n}$.

For the case $k=l$, from the $k=1$ case, we have $D_i\lesssim \Pi_jF_jn_0\beta$. (The $F_j$
comes from renormalizing $p_{root}$ in Eq. \ref{eqtree}). So
$D_ i$ dominates the sum, (since the $F_j$ are all larger than 1),
 and thus $D_f\lesssim \Pi_jF_jn_0\beta$. We obtain,
using our bound from Section \ref{Sbound},
\begin{equation}
c_{k,k}\leq\frac{D}{S}\lesssim\frac{2n_0\beta^4}{\tilde{n}^{3/5}}.
\end{equation}

For the case $l<k$, we can get at most 1 sure node in $L$ from one exception
in a subtree. In that case, Phase I can never get started, so we go straight
into Phase II. Thus $D_i=0$, and the above sum is dominated by the terms
$16\tilde{n}(c_{k-1,l}\beta)^2$ and $2n_0c_{k-1,l}\beta$. From our bound on $S$
from Section \ref{Sbound}, we have $S\gtrsim 2\tilde{n}$. Combining these two
bounds, we obtain
\begin{equation}
c_{k,l}\lesssim(8n_0)^{k-l+1}\tilde{n}^{-3(k-l+1)/5}\beta^{4+6(k-l)}.
\end{equation}
This proves the $c_{k,l}$ bound of Lemma \ref{mainlem}

\end{document}